
\documentclass{IEEEtran4PSCC}
% The automatically selected options are the format (US letter) and conference mode.

% Some very useful LaTeX packages include:
% (uncomment the ones you want to load)
% *** MISC UTILITY PACKAGES ***
%
%\usepackage{ifpdf}
% Heiko Oberdiek's ifpdf.sty is very useful if you need conditional
% compilation based on whether the output is pdf or dvi.
% usage:
% \ifpdf
%   % pdf code
% \else
%   % dvi code
% \fi
% The latest version of ifpdf.sty can be obtained from:
% http://www.ctan.org/tex-archive/macros/Latex/contrib/oberdiek/
% Also, note that IEEEtran.cls V1.7 and later provides a builtin
% \ifCLASSINFOpdf conditional that works the same way.
% When switching from Latex to pdfLatex and vice-versa, the compiler may
% have to be run twice to clear warning/error messages.

% *** CITATION PACKAGES ***
%
\usepackage{cite}
% cite.sty was written by Donald Arseneau
% V1.6 and later of IEEEtran pre-defines the format of the cite.sty package
% \cite{} output to follow that of IEEE. Loading the cite package will
% result in citation numbers being automatically sorted and properly
% 'compressed/ranged'. e.g., [1], [9], [2], [7], [5], [6] without using
% cite.sty will become [1], [2], [5]--[7], [9] using cite.sty. cite.sty's
% \cite will automatically add leading space, if needed. Use cite.sty's
% noadjust option (cite.sty V3.8 and later) if you want to turn this off
% such as if a citation ever needs to be enclosed in parenthesis.
% cite.sty is already installed on most LaTeX systems. Be sure and use
% version 5.0 (2009-03-20) and later if using hyperref.sty.
% The latest version can be obtained at:
% http://www.ctan.org/tex-archive/macros/Latex/contrib/cite/
% The documentation is contained in the cite.sty file itself.

% *** GRAPHICS RELATED PACKAGES ***
%
\ifCLASSINFOpdf
   \usepackage[pdftex]{graphicx}
  % declare the path(s) where your graphic files are
  % \graphicspath{{../pdf/}{../jpeg/}}
  % and their extensions so you won't have to specify these with
  % every instance of \includegraphics
  % \DeclareGraphicsExtensions{.pdf,.jpeg,.png}
\else
  % or other class option (dvipsone, dvipdf, if not using dvips). graphicx
  % will default to the driver specified in the system graphics.cfg if no
  % driver is specified.
   \usepackage[dvips]{graphicx}
  % declare the path(s) where your graphic files are
  % \graphicspath{{../eps/}}
  % and their extensions so you won't have to specify these with
  % every instance of \includegraphics
  % \DeclareGraphicsExtensions{.eps}
\fi
% graphicx was written by David Carlisle and Sebastian Rahtz. It is
% required if you want graphics, photos, etc. graphicx.sty is already
% installed on most LaTeX systems. The latest version and documentation
% can be obtained at: 
% http://www.ctan.org/tex-archive/macros/Latex/required/graphics/
% Another good source of documentation is 'Using Imported Graphics in
% LaTeX2e' by Keith Reckdahl which can be found at:
% http://www.ctan.org/tex-archive/info/epsLatex/
%
% Latex, and pdfLatex in dvi mode, support graphics in encapsulated
% postscript (.eps) format. pdfLatex in pdf mode supports graphics
% in .pdf, .jpeg, .png and .mps (metapost) formats. Users should ensure
% that all non-photo figures use a vector format (.eps, .pdf, .mps) and
% not a bitmapped formats (.jpeg, .png). IEEE frowns on bitmapped formats
% which can result in 'jaggedy'/blurry rendering of lines and letters as
% well as large increases in file sizes.
%
% You can find documentation about the pdfTeX application at:
% http://www.tug.org/applications/pdftex

% *** MATH PACKAGES ***
%
\usepackage[cmex10]{amsmath}
\hyphenation{op-tical net-works semi-conduc-tor}

% packages
\usepackage{commath}
\usepackage{mathtools}
\usepackage{steinmetz}		% for phasor
\usepackage{supertabular}	% for long tabular
\usepackage{tabularx}		% for long text in tabular
\usepackage{algorithm}		% for algorithms
\usepackage{algorithmic}		% for algorithms
\usepackage{amsmath,amssymb}
\newtheorem{rem}{Remark}
\newtheorem{defi}{Definition}
\newtheorem{prop}{Proposition}
\newtheorem{thm}{Theorem}
\newtheorem{ass}{Assumption}
\newtheorem{lemma}{Lemma}
\newenvironment{proof}{\vspace{0cm}\paragraph*{Proof}}{\vspace{0cm}\hfill$\blacksquare$}
\usepackage{subfig}
\usepackage{color}
\usepackage{tikz}
\usetikzlibrary{shapes,arrows}
\usepackage{soul}
\usepackage{url} 
\usepackage{afterpage}
\usepackage[makeroom]{cancel}
\usepackage{enumitem} 
\usepackage{changes}
\usepackage{caption}
\captionsetup[figure]{font=small}

% extra params
\newcommand{\normsz}[1]{\lVert#1\rVert}
\newcommand{\distset}[1]{\lvert#1\lvert}
\newcommand{\pfop}[1]{F}

\renewcommand{\r}{\textcolor{black}}
% \renewcommand{\ora}{\textcolor{orange}}

% \usepackage{algorithmicx}
% \usepackage{algpseudocode}
% \algnewcommand\algorithmicinput{\textbf{Measure:}}
% \algnewcommand\measure{\item[\algorithmicinput]}

% spaces
\setlength{\belowdisplayskip}{0.12cm} \setlength{\belowdisplayshortskip}{0.12cm}
\setlength{\abovedisplayskip}{0.12cm} \setlength{\abovedisplayshortskip}{0.12cm}
\linespread{1}	
\arraycolsep=1pt
\setlength{\tabcolsep}{2pt}

\usepackage{accents}
\newcommand\munderbar[1]{\underaccent{\bar}{#1}}

% Set footer
\makeatletter
\let\old@ps@headings\ps@headings
\let\old@ps@IEEEtitlepagestyle\ps@IEEEtitlepagestyle
\def\psccfooter#1{%
    \def\ps@headings{%
        \old@ps@headings%
        \def\@oddfoot{\strut\hfill#1\hfill\strut}%
        \def\@evenfoot{\strut\hfill#1\hfill\strut}%
    }%
    \def\ps@IEEEtitlepagestyle{%
        \old@ps@IEEEtitlepagestyle%
        \def\@oddfoot{\strut\hfill#1\hfill\strut}%
        \def\@evenfoot{\strut\hfill#1\hfill\strut}%
    }%
    \ps@headings%
}
\makeatother

%\psccfooter{%
%        \parbox{\textwidth}{\hrulefill \\ \small{22nd Power Systems Computation Conference} \hfill \begin{minipage}{0.2\textwidth}\centering \vspace*{4pt} \includegraphics[scale=0.06]{}\\\small{PSCC 2022} \end{minipage} \hfill \small{Porto, Portugal --- June 27 -- July 1, 2022}}%
%}

\begin{document}
%
% paper title
% Titles are generally capitalized except for words such as a, an, and, as,
% at, but, by, for, in, nor, of, on, or, the, to and up, which are usually
% not capitalized unless they are the first or last word of the title.
% Linebreaks \\ can be used within to get better formatting as desired.
% Do not put math or special symbols in the title.
\title{Cross-layer Design for Real-Time Grid Operation: 
\\ Estimation, Optimization and Power Flow}

%% aythors
\author{
\IEEEauthorblockN{Miguel Picallo, Dominic Liao-McPherson, Saverio Bolognani, Florian D{\"o}rfler}
\IEEEauthorblockA{Automatic Control Laboratory, ETH Zurich, 8092 Zurich, Switzerland  \\ \{miguelp,dliaomc,bsaverio,dorfler\}@ethz.ch
}
}

%% To specify the authors when (number of affiliations <= 2)
%\author{
%\IEEEauthorblockN{Author n.1 Name per Affiliation A\\ Author n.2 Name per Affiliation A}
%\IEEEauthorblockA{(Affiliation A) Department Name of Organization \\
%Name of the organization, acronyms acceptable\\
%City, Country\\
%\{email author n.1, email author n.2\}@domain (if desired)}
%\and
%\IEEEauthorblockN{Author n.1 Name per Affiliation B\\ Author n.2 Name per Affiliation B}
%\IEEEauthorblockA{(Affiliation B) Department Name of Organization \\
%Name of the organization, acronyms acceptable\\
%City, Country\\
%\{email author n.1, email author n.2\}@domain (if desired)}
%}

%% To specify the authors when (number of affiliations > 2)
% \author{\IEEEauthorblockN{Author n.1\IEEEauthorrefmark{1},
% Author n.2\IEEEauthorrefmark{2},
% Author n.3\IEEEauthorrefmark{3}, 
% Author n.4\IEEEauthorrefmark{3} and
% Author n.5\IEEEauthorrefmark{4}}
% \IEEEauthorblockA{\IEEEauthorrefmark{1} Department Name of Organization A\\
% Name of the organization A,
% Address A\\ Emails if wanted}
% \IEEEauthorblockA{\IEEEauthorrefmark{2} Department Name of Organization B\\
% Name of the organization B,
% Address B\\ Emails if wanted}
% \IEEEauthorblockA{\IEEEauthorrefmark{3} Department Name of Organization C\\
% Name of the organization C,
% Address C\\ Emails if wanted}
% \IEEEauthorblockA{\IEEEauthorrefmark{4}Department Name of Organization D\\
% Name of the organization D,
% Address D\\ Emails if wanted}
% }

% make the title area
\maketitle

% As a general rule, do not put math, special symbols or citations
% in the abstract
\begin{abstract}
In this paper, we propose a combined Online Feedback Optimization (OFO) and dynamic estimation approach for a real-time power grid operation under time-varying conditions. A dynamic estimation uses grid measurements to generate the information required by an OFO controller, that incrementally steers the controllable power injections set-points towards the solutions of a time-varying AC Optimal Power Flow (AC-OPF) problem. More concretely, we propose a quadratic programming-based OFO that guarantees satisfying the grid operational constraints, like admissible voltage limits. Within the estimation, we design an online power flow solver that efficiently computes power flow approximations in real time. Finally, we certify the stability and convergence of this combined approach under time-varying conditions, and we validate its effectiveness on a simulation with a test feeder and high resolution consumption data.
\end{abstract}
\begin{IEEEkeywords}
AC optimal power flow, dynamic state estimation, online optimization, power flow solver, voltage regulation
\end{IEEEkeywords}

% Use this to place sponsorships
\thanksto{
% Funding by the Swiss Federal Office of Energy through the project “Renewable Management and Real-Time Control Platform (ReMaP)” (SI/501810-01) and the ETH Foundation is gratefully acknowledged.
\noindent Funding by the Swiss Federal Office of Energy through the projects “ReMaP” (SI/501810-01) and “UNICORN” (SI/501708), the Swiss National Science Foundation through NCCR Automation, and the ETH Foundation is gratefully acknowledged.
}

\section{Introduction}

% \begin{itemize}[leftmargin=*]
%     \item Power grid operation undergoing a paradigm shift, especially on distribution grids \cite{}. 
%     \item Online feedback optimization \cite{molzahn2017survey,anese2016optimal,hauswirth2017onlinePF} great performance for operating power grids, demonstrated on distribution grids \cite {ortmann2020experimental}.
%     \item Quadratic Programming solvers, computationally efficient, deployed on embedded systems for real-time optimization \cite{domahidi2013ecos, stellato2020osqp}.
%     \item Distribution grids not enough state measurements. \cite{schenato2014bayesian,picallo2017twostepSE}, need to include estimation in online feedback optimization \cite{picallo2020seopfrt}.
%     \item Contributions:
%     \begin{itemize}
%     \item Introduce QP operator and analyze its convergence under time-varying conditions.
%     \item Use sensitivity-conditioning to include power flow solver into the same time-scale, avoiding nested subiterations.
%     \end{itemize}
% \end{itemize}

% Climate change is motivating a reduction of the power grids carbon footprint. As a result, 
The operation of power systems is experiencing an extensive transformation 
due to the introduction of controllable elements (generation power, curtailment, reactive power in converters, flexible loads, etc.), pervasive sensing (smart meters, phasor measurements units, etc.), and communication networks that enable fast sampling and control-loop rates. These new technologies offer tools to reduce operational costs and improve the efficiency, reliability, and sustainability of existing infrastructure. This is especially relevant in distribution grids, where the share of installed renewable generation is continually increasing, but measurement scarcity, the unpredictability and volatility of household power consumption, and renewable energy integration, makes it challenging to enforce grid specifications and constraints, such as voltage magnitude and line thermal limits. Hence, new control and optimization strategies are essential to unlock the potential of these technologies.

{Grid operation can be optimized} by solving an AC optimal power flow (AC-OPF) problem \cite{molzahn2019surveyrel} to determine controllable power injection set-points that minimize operation costs and satisfy grid operational constraints. However, computing the solution of the AC-OPF in real-time can be challenging, limiting the potential of fast sensor and control-loop rates. Online Feedback Optimization (OFO) \cite{molzahn2017survey, anese2016optimal, hauswirth2017onlinePF, picallo2020seopfrt,tang2017real} is an alternative methodology for operating the grid in real-time, e.g., on sub-second time scales and under rapidly varying conditions, and has shown promising results in both simulations and experimental settings \cite{ortmann2020experimental,ortmann2020fully}. OFO constructs a controller that continuously drives the controllable set-points {towards the minimizer} of a time-varying AC-OPF. {This controller takes incremental optimization steps using grid state measurements as feedback,} which enhances robustness against unmeasured disturbances and model mismatch \cite{doerfler2019autonomousgrids,colombino2019robustness}. Furthermore, OFO enables enforcement of grid specifications, either as soft constraints, i.e., penalizing violations \cite{hauswirth2017onlinePF, picallo2020seopfrt, tang2017real}, or asymptotically via dual approaches \cite{anese2016optimal,ortmann2020experimental, qu2019optimal}. %, or projections \cite{haberle2020non}. 
In grids where measurements are scarce, as is typical in distribution grids, \r{a state estimation \cite{abur2004power} is required to generate the state information used by the OFO controller. Recent work \cite{guo2020solving} proposes to solve a static state estimation problem within each OFO control loop. However, this turns OFO into a nested optimization algorithm, which may jeopardize the real-time applicability of the OFO scheme. Alternatively,} a dynamic state estimation \cite{zhao2019revsdynse} with a linear power flow approximation can be used to generate the state estimates required by the OFO controller \r{without nested iterations}, and preserve overall convergence and stability properties of the interconnected system \cite{picallo2020seopfrt}. However, such linear approximations introduce model mismatch and can lead to performance degradation and constraint violations.

In this paper, we propose an OFO scheme based on quadratic programming, and combine it with a dynamic estimation that includes a nonlinear power flow model. Our contributions are four-fold: First, {we propose a quadratic programming-based OFO inspired by \cite{torrisi2018projected,haberle2020non}. This OFO is able to handle constraints more efficiently than penalization or Lagrangian methods,} and provides guarantees on the maximum constraint violation. Second, we use the recently proposed sensitivity-conditioning approach \cite{picallo2021predictivesensitivity} to design a {Newton's method based} online power flow solver for the dynamic estimation. This solver allows us to include a nonlinear power flow model, and approximate its solutions dynamically in a computationally efficient way that is suitable for real-time system operation. Third, we certify that the interconnection of the OFO, estimation and online power flow solver is stable, convergent, {and quantify the tracking error as a function of the time-varying conditions and the power flow approximation}.
%, and {despite time-varying conditions and a power flow approximation} converges to a neighborhood of the optimal input set-points. 
Fourth and finally, we validate our approach {in simulation} using the IEEE 123-bus test feeder \cite{kersting1991radial} and show its superior performance compared to the gradient based OFO in \cite{picallo2020seopfrt}.

The paper is structured as follows. Section~\ref{sec:grid} presents the grid model and the problem setup. Section~\ref{sec:structure} describes the proposed approach.
%, and Sections~\ref{sec:opt}-\ref{sec:PFsolver} detail its parts. %Section~\ref{sec:opt} details the online feedback optimization part, Section~\ref{sec:estim} the dynamic estimation, and Section~\ref{sec:PFsolver} the online power flow solver. 
Section~\ref{sec:conc} provides convergence guarantees under time-varying conditions. Finally, Section~\ref{sec:test} validates our approach on a simulated test feeder. The proofs of our theoretical results are in the Appendix.

\section{Power System Model and Operation}\label{sec:grid}

{An electrical power grid can be represented by the complex vectors of bus voltages and power injections, $U \in \mathbb{C}^n$ and $S \in \mathbb{C}^n$ respectively, and its admittance matrix $Y \in \mathbb{C}^{n \times n}$.} 
{The state of the grid can be represented using the voltage magnitudes and phases: $x=[\, \abs{U}^T,\angle{U}^T]^T \in \mathbb{R}^{n_x}$. The inputs $u \in \mathcal{U} \subseteq \mathbb{R}^{n_u}$ contain the set-points {of the subset} of controllable active and reactive power injections in $S$ (e.g., renewable generation, power converters or flexible loads), and controllable voltage magnitude levels, e.g., in the slack bus through a tap changer. The exogenous disturbances $d \in  \mathcal{D} \subseteq \mathbb{R}^{n_d}$ are the uncontrollable power injections in $S$, e.g., uncontrollable exogenous loads. The sets $\mathcal{U},\mathcal{D}$ denote the admissible inputs and disturbances installed in the grid. Since the grid power injections cannot be arbitrarily large due to physical limitations and power ratings, we can safely assume that the sets $\mathcal{U},\mathcal{D}$ are compact, i.e., closed and bounded.}

\subsection{Power flow equations and high-voltage solution}
At equilibrium, an electrical grid satisfies the power flow equations: $U \circ \bar{Y} \bar{U} = S$, where $\bar{(\cdot)}$ denotes the complex conjugate and $\circ$ the element-wise product. The power flow equations can be expressed in compact form using a smooth function $h(\cdot): \mathbb{R}^{n_x} \times \mathbb{R}^{n_u} \times \mathbb{R}^{n_d} \to \mathbb{R}^{n_x}$ \cite[Eq.~2.5,2.6]{molzahn2019surveyrel}: 
\begin{equation} \label{eq:implicit_pf}
    h(x,u,d)=0.
\end{equation}
Although the power flow equations may have multiple solutions, in this work we focus on the so-called \textit{high-voltage} solution \cite{dvijotham2017high}. Certain conditions on $\mathcal{U},\mathcal{D}$ and the grid guarantee the existence and uniqueness of the \textit{high-voltage} solution close to the nominal operating point \cite{bolognani2016existence, bernstein2018loadflowexist}. Furthermore, the Jacobian $\nabla_x h(\cdot)$ is invertible at a \textit{high-voltage} solution far from voltage collapse conditions \cite{aolaritei2018voltagestab}. Hence, we assume:
%Assumption~\ref{ass:existsol} is reasonable when operating not overly far from nominal conditions.

\begin{ass}\label{ass:existsol} 
% The sets $\mathcal{U}$ and $\mathcal{D}$ are non-empty and compact, and 
For all $u \in \mathcal{U}$ and $d \in \mathcal{D}$, there exists a unique \textit{high-voltage} solution of $h(x,u,d)=0$, and $\nabla_x h (x,u,d)$ is invertible at that solution.
\end{ass}

Under this assumption, the implicit function theorem \cite{krantz2012implicit} guarantees the existence of a local solution map $\chi(\cdot)$ for the \textit{high-voltage} solution $\chi(u,d)$ at $u,d$, satisfying $h(\chi(u,d),u,d)=0$. For the rest of the paper, we will use $\chi(u,d)$ to denote the \textit{high-voltage} solution of $h(x,u,d)=0$. %The invertibility of $\nabla_x h(x,u,d)$ is widely assumed in these kind of application and is guaranteed in certain cases \cite{chiang1990existence}.

{The invertibility of $\nabla_x h(\cdot)$ allows us to use Newton's method to solve the power flow equations \cite{molzahn2019surveyrel}, and evaluate the solution map $\chi(\cdot)$, which is not available in closed form.} Newton's method computes $\chi(u,d)$ using the recursion
\begin{equation}\label{eq:rootfind}\begin{array}{l}
     x_{k+1} = x_{k} -\nabla_x h(x_{k},u,d)^{-1} h(x_{k},u,d),
\end{array}
\end{equation} 
and converges at a quadratic rate under our assumptions.

\subsection{Grid operation: AC-OPF and measurements}
{In grid operation, the controllable inputs $u$ can be chosen to minimize the operational cost, while satisfying grid operational constraints. This can be represented} as an AC-OPF problem \cite{molzahn2019surveyrel}:
\begin{equation}\label{eq:optprob}\begin{array}{l}
    \min_{u \in [\munderbar{u}_t,\bar{u}_t]} f(u) \text{ s.t. } g(u,d_t) \leq 0,
\end{array}
\end{equation}
where $f(u)$ is the operational cost; $g(u,d_t)= \bar{g}(\chi(u,d_t))\leq 0$ enforces the grid constraints for a disturbance realization $d_t$ at time $t$, e.g., voltage magnitude limits; and $\munderbar{u}_t$ and $\bar{u}_t$ denote the minimum and maximum input available at $t$ within the installed capacity $\mathcal{U}$, i.e., $[\munderbar{u}_t,\bar{u}_t] \subseteq\mathcal{U}$. The AC-OPF \eqref{eq:optprob} is not a static problem, it changes with time due to the temporal variations of the physical grid quantities $\theta_t=[d_t^T,\munderbar{u}_t^T,\bar{u}_t^T]^T \in \Theta = \mathcal{D} \times \mathcal{U} \times \mathcal{U}$ caused by, e.g., a sudden change in the power demand $d$, or available renewable power in $[\munderbar{u},\bar{u}]$. %, where the set $\Theta$ is compact under Assumption~\ref{ass:existsol}. 
The goal in real-time optimal grid operation is to continuously solve \eqref{eq:optprob} as $\theta_t$ varies with time.
% We assume the following to simplify the analysis of \eqref{eq:optprob}:

% % For the rest of the paper, we assume the following:
% \begin{ass}\label{ass:regsol} 
% For all $\theta \in \Theta$, there exists a single stationary point $u^*(\theta)$ of \eqref{eq:optprob}, i.e., satisfying the first-order Karush-Kuhn-Tucker ({KKT}) conditions \cite{bertsekas1997nonlinear}. Furthermore, $u^*(\theta)$ is a global minimizer of \eqref{eq:optprob} and a strongly regular solution \cite[Def.~1.23]{izmailov2014newton}, i.e., $u^*(\theta)$ is Lipschitz continuous.
% \end{ass}

% This assumption is not true in general given the nonconvexity of the power flow equations \cite{bukhsh2013local}, i.e., $g(u,d)$ is not convex in $u$. Yet, since $\Theta$ is compact, it may hold for certain $f(\cdot),g(\cdot)$, e.g., a $\eta$-strongly convex $f(\cdot)$ with a sufficiently large $\eta$. {[add reference]? Or can this assumption be dropped and prove convergence to the set of local minima?}

Standard AC-OPF \eqref{eq:optprob} requires monitoring the grid to obtain full measurements of $d$, unfortunately these are not available in most low-voltage grids. {In such cases, $d$ is estimated by, e.g., solving a state estimation problem \cite{abur2004power}, using} measurements of various physical quantities (like voltages, currents, and power magnitudes and/or phasors) that come from different sensors (e.g., smart meters or phasor measurement units). Furthermore, due to the scarce number of measurements in some grids, these are usually complemented with low-accuracy load forecasts known as \textit{pseudo-measurements} \cite{picallo2017twostepSE}, to achieve numerical observability \cite{baldwin1993power}. At each time $t$, {the measurements $y_t$ depend on the state $x_t$ through a measurement model $\bar{c}(\cdot)$}, i.e.,
\begin{equation}\label{eq:meas}
    y_t=\bar{c}(x_t)+\nu_t,
\end{equation}
where $\nu_t$ is a bounded zero-mean white-noise signal with covariance $\Sigma_{\nu,t}$. %The measurements $y$ allow to generate the estimates $\hat{d}$, e.g., by solving a state estimation problem \cite{abur2004power}, that substitute $d$ in \eqref{eq:optprob}.

\section{Online Feedback Optimization and Estimation}\label{sec:structure}

Due to the complexity of the AC-OPF \eqref{eq:optprob}, solving it in real time may be challenging, and the solutions may quickly become outdated and suboptimal due to rapid variations of $\theta_t$ or model mismatch \cite{doerfler2019autonomousgrids}. Furthermore, the need to estimate $d$ via the measurements $y$ in \eqref{eq:meas} adds an extra layer of computational effort. Therefore, solving \eqref{eq:optprob} in such a \textit{feedforward} fashion may not be suitable for real-time grid operation. {Instead, we propose} a \textit{feedback} architecture combining a measurement-driven dynamic estimation{, e.g., a Kalman filter \cite{jazwinski1970stochfilt}}, and a OFO approach \cite{molzahn2017survey, anese2016optimal, hauswirth2017onlinePF, picallo2020seopfrt}, as illustrated in Fig.~\ref{fig:bldiag}.
This approach allows us to monitor the grid state, and {use it as feedback to} update the input set-points at sub-second sampling and control-loop rates. Moreover, {it improves robustness against model-mismatch and unmeasured disturbances \cite{doerfler2019autonomousgrids,colombino2019robustness}.}

\tikzstyle{block} = [draw, fill=white, rectangle, 
    minimum height=2em, minimum width=3em]
\tikzstyle{block2} = [draw, fill=white, rectangle, 
    minimum height=1.5em, minimum width=3em]
\tikzstyle{sum} = [draw, fill=white, circle, node distance=1cm]
\tikzstyle{input} = [coordinate]
\tikzstyle{output} = [coordinate]
\tikzstyle{pinstyle} = [pin edge={to-,thin,black}]

\begin{figure}[b]
\centering
%\vspace{0.25cm}
\begin{tikzpicture}[auto, node distance=2cm,>=latex']
% upper level
	\node [input] (input0) {};   
    \node [block, right of =input0, node distance = 1cm,align=left] (sat) {Saturation \\ $\munderbar{u} \leq u \leq \bar{u}$};
    \node [block, right of = sat, node distance = 2.8cm,align=left] (sys) {Power System \\ \hspace{0.4cm} $\chi(u,d)$};
    \node [block, right of = sys, node distance = 3.2cm,align=left, minimum height = 1.cm] (meas) {\eqref{eq:meas}: Measurements \\ \hspace{0.15cm} $y=\bar{c}(x)+\nu$};
    
%lower level
    \node [block, below of = sat, node distance = 1.8cm,align=left] (opt) {\eqref{eq:oper}: Online \\ Feedback \\ Optimization};
    \node [block, below of = sys, node distance = 1.55cm,align=left] (pf1) {\eqref{eq:rootfindPSint}: Power \\ Flow solver};
%    \node [block2, below of = sys, node distance = 2.4cm,align=left] (pf2) {PF solver};
    \node [block, below of = meas, node distance = 1.8cm,align=left, minimum height = 1.4cm] (est) {\eqref{eq:KFsimple}: Dynamic \\ estimation of \\ disturbance $d$};
    
    \draw[->] (sat) -- node [name=usat,above,pos=0.46] {} node [name=usat4,above=0,pos=0.5] {$u$} (sys);
    \draw[->] (sys) -- node [name=x,above,pos=0.5] {$x$} (meas);
    \draw[->] (meas) -- node [name=y,left] {$y$} (est);
%    \draw[->] (est) -- (pf2);
%    \draw[->] (pf1) -- node [name=KFeq, above = 3pt] {\eqref{eq:KFsimple}:} (est);
%    \draw[->] (pf2) -- node [name=xdhat,below=-1pt,pos =0.6] {$\hat{x},\hat{d}$} (opt);
%    \draw[->] (opt) -- node [name=u,right,pos=0.6] {$u^+$} (sat);
%    \draw[<->] (pf1) -- node [name=usat2,below] {} (opt);
%    \draw (usat) -- node[name=usat3,right] {$u,\munderbar{u},\bar{u}$} (usat2);
    \draw[->] (usat) |- node[name=usat3,right, pos = 0.29] {$u$} ([yshift=-0.25cm]pf1.north west);
    \draw[->] ([yshift=-0.7cm]pf1.north west) -- node [name=xhat,above=-2pt,pos =0.5] {$\hat{x}$} ([yshift=-0.7cm]opt.north east);
    \draw[->] ([yshift=-0.7cm]est.north west) -- node [name=dhat,above=-2pt,pos =0.5] {$\hat{d}$}  node [name=dhatghost,above=-0pt,pos =0.5] {} ([yshift=-0.7cm]pf1.north east);
    \draw[->] ([yshift=-0.25cm]pf1.north east) -- node [name=xhat2,above=-2pt,pos =0.5] {$\hat{x}$} ([yshift=-0.25cm]est.north west);
    \draw[->] (dhatghost) |- ([yshift=-1.2cm]opt.north east);
    \draw[->] ([xshift=-0.5cm]opt.north) -- node [name=u,right,pos=0.6] {$u^+$} ([xshift=-0.5cm]sat.south);
    \draw[->] ([xshift=0.3cm]sat.south) -- node [name=usatlim,right,pos=0.5] {$\munderbar{u},\bar{u}$} ([xshift=0.3cm]opt.north);

    \node [block, right of =sys, node distance = 0.38cm,fill = red, fill opacity=0.2, minimum height = 1.2cm, minimum width = 8.7cm] (sysblock) {};
    \node [block, below of =sysblock, node distance = 1.8cm,fill = blue, fill opacity=0.2, minimum height = 1.6cm, minimum width = 8.7cm] (algblock) {};
    % \node [block, below of =x, node distance = 2.2cm,fill = green, fill opacity=0.2, minimum height = 1.6cm, minimum width = 5.1cm] (estblock) {};
    \node[text=red, above of = usat, node distance = 0.7cm] (physsys) {\textit{Physical system}};
    \node[text=blue, below of = pf1, node distance =1.3cm] (optest) {\textit{OFO and estimation}};
        
    \node [input,above of = sys, node distance = 1.cm] (inpdist) {};
    \draw[->] (inpdist) -- node[name=dist,right,pos =0.2] {$d$} (sys);    
    \node [input,above of = meas, node distance = 1.cm] (inpnoise) {};
    \draw[->] (inpnoise) -- node[name=noise,right,pos =0.2] {$\nu$} (meas);  
\end{tikzpicture}

%\vspace{0.2cm}
\caption{Real-time power grid (upper red block) operation using Online Feedback optimization (OFO), dynamic estimation and online power flow solvers (lower blue block).}\label{fig:bldiag}
\end{figure}
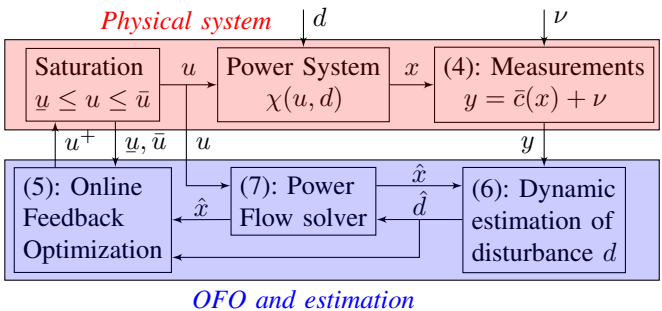

\subsection{Blocks and structure}

\subsubsection{Online Feedback Optimization (OFO)} The OFO controller uses the measured values of $x_t$ and $d_t$ as feedback to incrementally steer $u_t$ towards minimizers of \eqref{eq:optprob} as $\theta_t$ varies in time. The controller is based on an operator $T:\mathcal{U} \times \mathbb{R}^{n_x} \times \Theta \to [\munderbar{u},\bar{u}]$, that enforces the operational limits:
\begin{equation}\label{eq:oper}
    % u_{t+1} = T(u_t ,x_{t},\theta_{t}) \in [\munderbar{u}_{t},\bar{u}_{t}]
    u_{t}^+ = T(u_t ,x_{t},\theta_{t}) % \in [\munderbar{u}_{t},\bar{u}_{t}], DLM: removed since this is redundant with \to [\munderbar ...] above!
\end{equation}
{where $u_t$ and $u_t^+$ denote the inputs before and after the controller update \eqref{eq:oper} at time $t$, respectively. Since the limits $\munderbar{u}_t,\bar{u}_t$ vary with time, the input $u_t^+$ may not be feasible at $t+1$. Hence, the actual input $u_{t+1}$ applied to the grid at time $t+1$ is not $u_t^+$, but rather $u_t^+$ saturated by $[\munderbar{u}_{t+1},\bar{u}_{t+1}]$.}

{Note that the state value $x$ is redundant given $u$ and $d$, as it can be computed as $x=\chi(u,d)$. However, \eqref{eq:oper} takes advantage of both $x$ and $d$ to enhance robustness against model mismatch in $\chi(\cdot)$ \cite{colombino2019robustness} and avoids solving $h(x,u,d)=0$ to save computational time \cite{hauswirth2017onlinePF}. This is particularly relevant in real-time operation, since $T(\cdot)$ must be computationally efficient, especially when using embedded hardware.} 

\r{Note that in our case, see Fig.~\ref{fig:bldiag}, the OFO controller does not have access to the actual values $x_t$ and $d_t$, but instead receives the estimates $\hat{x}_t$ and $\hat{d}_t$ from the estimation block.}

\subsubsection{Dynamic estimation} {It generates estimates $\hat{d}_t$ and $\hat{x}_t$ of $d_t$ and $x_t$ required by OFO \eqref{eq:oper}, see Fig.~\ref{fig:bldiag}. Typically, power system state estimation relies on solving a static nonlinear optimization problem to directly estimate the state $x$ \cite{abur2004power}. However, given the dynamics caused by the temporal variation of $d_t$ and $u_t$, and the nonlinear implicit power flow model $h(\cdot)$, it is preferable to compute a dynamic estimate of $d$ instead of $x$. Furthermore, $d$ is an exogenous quantity unaffected by $u$, and it typically has forecasts available. This dynamic estimation uses the measurements $y_t$ to update the estimates $\hat{d}_t$ via an operator $E:\mathbb{N} \times \mathcal{D}  \times \mathbb{R}^{n_d \times n_d} \times \mathcal{U} \times \mathbb{R}^{n_x} \times \mathbb{R}^{n_y} \to \mathcal{D} \times \mathbb{R}^{n_d \times n_d}$, e.g., a Kalman filter, and $\hat{x}_t$ via the power flow map $\chi(\cdot)$:}
\begin{equation}\label{eq:KFsimple}\begin{array}{rl}
    (\hat{d}_{t},P_t) = & E(t,\hat{d}_{t-1}+\mu_{t-1},P_{t-1},u_t,\chi(u_t ,\hat{d}_{t-1}+\mu_{t-1}),y_t) \\
    % E\big(\hat{d}_{t-1}+\mu_{t},y_t-\bar{c}(\chi(u_t ,\hat{d}_{t-1}+\mu_t))\big) \\ %\hat{d}_{t-1} + \mu_t + K_{t}\big(y_{t} - \bar{c}\big(\chi(u_t ,\hat{d}_{t-1} + \mu_t)\big) \big) \\
    \hat{x}_t = & \chi(u_t ,\hat{d}_{t})
\end{array}
\end{equation}
where $P_t$ is a covariance matrix representing the uncertainty in the estimate $\hat{d}_t$, and $\mu_{t}$ is a drift term that represents the expected temporal variation of $d_t$. %, and $y_t-\bar{c}(\chi(u_t ,\hat{d}_{t-1}+\mu_t))$ is the measurement error given the estimate $\hat{d}_{t-1}+\mu_t$. 

\subsubsection{Online power flow solver} Since a closed-form expression for the \textit{high-voltage} solution map $\chi(\cdot)$ is not available, the dynamic estimation \eqref{eq:KFsimple} requires evaluating $\chi(u_t ,\hat{d}_{t-1} + \mu_t)$, and $\chi(u_t ,\hat{d}_{t})$ using a power flow solver, e.g., Newton's method \eqref{eq:rootfind}, see Fig.~\ref{fig:bldiag}. However, solving \eqref{eq:implicit_pf} at each iteration may be computationally expensive. Instead, we design an efficient online power flow solver {based on the recently proposed sensitivity-conditioning \cite{picallo2021predictivesensitivity}}. This solver generates a power flow approximation $z_t \approx \hat{x}_t$, and updates it at every step using an operator $\pfop{}:\mathbb{R}^{n_x} \times \mathcal{U}^2 \times \mathcal{D}^2 \to \mathbb{R}^{n_x}$:
\begin{equation}\label{eq:rootfindPSint}
    z_{t+1} = \pfop{}(z_{t},u_{t+1},u_{t},\hat{d}_{t+1},\hat{d}_t).
\end{equation}

In the remainder of the paper, we explain how to design these blocks, {i.e., the operators $T,E,\pfop{}$ in \eqref{eq:oper}, \eqref{eq:KFsimple} and \eqref{eq:rootfindPSint},} so that the control architecture in Fig.~\ref{fig:bldiag} is suitable for real-time operation, and admits stability and convergence guarantees. Concretely, in Proposition~\ref{prop:hconv} and Theorems~\ref{thm:opconvloc} and \ref{thm:opconvglb} we certify that under time-varying conditions the online power flow approximation $z_t$ converges arbitrarily close to $\hat{x}_t$, and the input set-points $u_t$ converge close to the minimizers $u^*(\hat{\theta}_t)$ of the AC-OPF \eqref{eq:optprob} given $\hat{\theta}_t=[\hat{d}_t^T,\munderbar{u}_t^T,\bar{u}_t^T]^T$, {with a tracking error determined by the power flow approximation error $z_t-\hat{x}_t$ and the temporal variation $\hat{\theta}_{t+1}-\hat{\theta}_t$.}
%In particular, we will provide an online power flow solver to compute $\chi(\cdot)$ simultaneously with \eqref{eq:KFsimple} and \eqref{eq:oper}.
% \begin{equation*}\begin{array}{l}
%     \lim_{t \to 0} \normsz{\hat{x}_t-z_t}_2 \leq \epsilon \\
%     \lim_{t \to 0} \normsz{u_t \hspace{-0.03cm} - \hspace{-0.03cm} u^*(\hat{\theta}_t)}_2 \leq  \tfrac{L}{1-\xi}\sup_{t}(\normsz{\hat{\theta}_{t+1} \hspace{-0.05cm} - \hspace{-0.05cm} \hat{\theta}_{t}}_2 \hspace{-0.03cm} + \hspace{-0.03cm} \normsz{\hat{x}_t \hspace{-0.05cm} - \hspace{-0.05cm} z_t}_2),
% \end{array}
% \end{equation*}%\comment{keep eqs?}
% for some $\epsilon \ll 1,L>0$.

\subsection{Online Feedback Optimization (OFO) block}\label{sec:opt}

{Our objective is to use feedback to track the time-varying solution of the AC-OPF \eqref{eq:optprob}. Hence, the operator $T(\cdot)$ in OFO \eqref{eq:oper} is typically based on an optimization algorithm that has been modified to incorporate measurements. These operators can be classified based on how they handle the grid constraints:
\subsubsection{Soft constraints} Instead of strictly enforcing the constraints $\bar{g}(\chi(u,d))=g(u,d) \leq 0$ as in \eqref{eq:optprob}, some approaches \cite{hauswirth2017onlinePF,tang2017real,picallo2020seopfrt} only penalize violations. The constraints are added to the cost function using a penalty function with a large parameter $\zeta \gg 0$ that discourages constraint violations:
\begin{equation}\label{eq:projgrad}
\bar{\phi}(u,x)=f(u) + \frac{\zeta}{2} \sum_i \max(\bar{g}_i(x),0)^2,    
\end{equation}
and projected gradient descent is used as the operator $T(\cdot)$ \cite{hauswirth2017onlinePF}.} 
\subsubsection{Hard constraints} Primal-dual gradient methods \cite{anese2016optimal, qu2019optimal} or dual ascent methods \cite{bolognani2015reactivePF,ortmann2020experimental} enforce $g(u,d)\leq 0$ asymptotically. However, these methods are harder to tune and tend to result in oscillatory closed-loop behavior \cite{haberle2020non}. Alternatively, quadratic-programming-based operators \cite{torrisi2018projected,haberle2020non,liao2020time}, similar to sequential quadratic programming \cite[Ch.~18]{nocedal2006numerical}, offer bounds on possible constraint violations \cite[Lemma.~4]{haberle2020non}. These operators use first-order approximations of $f(\cdot)$ and $g(\cdot)$, and a regularizing quadratic term based on a continuous positive definite matrix $B(u)\succ 0$. The resulting operator is
%\comment{change to z?}
\begin{equation}\label{eq:qpaproxhes}\begin{array}{l}
T(u,x,\theta) = \\[0.1cm]
\begin{array}{l}
     \underset{u^+ \in [\munderbar{u},\bar{u}]}{\arg\min} 
     \nabla_u f(u)^T(u^+ \hspace{-0.1cm} - \hspace{-0.05cm} u) + \frac{1}{2}(u^+ \hspace{-0.1cm} - \hspace{-0.05cm} u)^T B(u) (u^+ \hspace{-0.1cm} - \hspace{-0.05cm} u) \\[0.1cm]
    \text{s.t. } %\bar{g}(x) + \nabla_x \bar{g}(x) \nabla_u \chi(u,d)(u^+ \hspace{-0.1cm} - \hspace{-0.05cm} u) \leq 0,
    \underbrace{\bar{g}(x) \hspace{-0.05cm} - \hspace{-0.1cm} \nabla_x \bar{g}(x) \nabla_x h(x,u,d)^{-1} \hspace{-0.05cm} \nabla_u h(x,u,d)(u^+ \hspace{-0.1cm} - \hspace{-0.05cm} u)}_{= g(u,d) + \nabla_u g(u,d) %\underbrace{\nabla_x \bar{g}(\chi(u,d)) \nabla_u \chi(u,d)}_{} 
    (u^+  -  u)  \text{ at } x=\chi(u,d) }  \leq 0,
\end{array}
\end{array}
\end{equation}
{where the term $-\nabla_x h^{-1}\nabla_u h$ corresponds to the \textit{high-voltage} solution sensitivity $\nabla_u \chi(u,d)$ when $x=\chi(u,d)$ \cite{krantz2012implicit}.
%, i.e., $(-\nabla_x h^{-1}\nabla_u h)(\chi(u,d),u,d)=\nabla_u \chi(u,d)$, 
Hence, at $x=\chi(u,d)$, the expression in the constraint \eqref{eq:qpaproxhes} corresponds to a first-order approximation of $g(u^+,d)$.
} 

Since $B(u)\succ 0$, if \eqref{eq:qpaproxhes} is feasible, then it has a unique global minimum, and $T(\cdot)$ is a single-valued function. A large $B(u)$ penalizes deviating from the current operating point, where the first-order approximation is most accurate. %A typical choice for $B(u)$ in sequential quadratic programming \cite{nocedal2006numerical} is an approximation of the Lagrangian Hessian of \eqref{eq:optprob}. Yet, this Hessian requires second-order derivatives of $\chi(u,d)$, which are hard to derive analytically.\comment{remove? introduce as metric?}  % at $u,x,\theta,\lambda$, where $\lambda$ denotes the dual variables, i.e., $B(u,x,\theta,\lambda)=\nabla_{uu}^2 L(u,x,\theta,\lambda)$.  
Convex problems such as \eqref{eq:qpaproxhes} can be solved efficiently and reliably in real-time applications, even when deployed on embedded systems \cite{domahidi2013ecos}. In this paper, we use the operator \eqref{eq:qpaproxhes}, and compare its performance against a projected gradient descent operator based on the soft penalization \eqref{eq:projgrad}, used in previous work \cite{picallo2020seopfrt} in Section~\ref{sec:test}.

\subsection{Dynamic estimation block}\label{sec:estim}

As we only measure $y_t$ in \eqref{eq:meas}, an estimator is needed to reconstruct the estimates $\hat{x}_t$ and $\hat{d}_t$ of the signals $x_t$ and $d_t$. First, we model the evolution of the disturbance $d$ using the following linear stochastic process:
\begin{equation}\label{eq:stochproc}
    d_{t+1} = d_{t} + \mu_{t} + \omega_{t},
\end{equation}
where $\mu_{t}$ is the (known) drift term in \eqref{eq:KFsimple}, and $\omega_{t}$ is assumed to be a bounded zero-mean white-noise with covariance $\Sigma_{\omega,t}$. {In short, \eqref{eq:stochproc} represents the temporal variation of the loads.}

Then, we use an Extended Kalman filter \cite{reif1998ekf} to define the operator $E(\cdot)$ in \eqref{eq:KFsimple} that updates the estimate $\hat{d}_t$ based on $y_t$. When evaluated at an approximate \textit{high-voltage} solution $z_{t-1}$, the operator $E(\cdot)$  can be expressed as
\begin{equation*}
\begin{array}{l}
    % \hat{d}_{t}= & %\hat{d}_{t-1}+\mu_{t-1} + K_{t}\big(y_{t} - \bar{c}(\chi(u_t ,\hat{d}_{t-1}+\mu_{t-1}))\big) \\
    % \hat{d}_{t-1}+\mu_{t-1} + K_{t}\big(y_{t} - \bar{c}(z_{t-1})\big) \\
    % %  \hat{x}_{t} = & \chi(u_t ,\hat{d}_{t})
    %  P_{t} = & (I-K_{t} C_{t}) (P_{t-1} + \Sigma_{\omega,t-1}),
        E(t,\hat{d}_{t-1}+\mu_{t},P_{t-1},u_t,z_{t-1},y_t) = \\
    %\hat{d}_{t-1}+\mu_{t-1} + K_{t}\big(y_{t} - \bar{c}(\chi(u_t ,\hat{d}_{t-1}+\mu_{t-1}))\big) \\
    \Big( \hspace{-0.05cm} \hat{d}_{t-1} \hspace{-0.05cm} + \hspace{-0.05cm} \mu_{t-1} \hspace{-0.05cm} + \hspace{-0.05cm} K_{t}\big(y_{t} \hspace{-0.05cm} - \hspace{-0.05cm} \bar{c}(z_{t-1}) \hspace{-0.05cm} \big), 
    %  \hat{x}_{t} = & \chi(u_t ,\hat{d}_{t})
     (I \hspace{-0.05cm} - \hspace{-0.05cm} K_{t} C_{t}) (P_{t-1} \hspace{-0.05cm} + \hspace{-0.05cm} \Sigma_{\omega,t-1}) \hspace{-0.1cm} \Big),
\end{array}  
% \vspace{-0.1cm}
\end{equation*}
with
\vspace{-0.3cm}
\begin{equation*}
\begin{array}{rl}
C_t = & %\nabla_c c(u_t , \hat{d}_{t-1}+\mu_{t-1}) \\
 \overbrace{\nabla_x \bar{c}(z_{t-1}) (-\nabla_x h^{-1}\nabla_d h)(z_{t-1},u_t ,\hat{d}_{t-1}+\mu_{t-1})}^{=\nabla_d c(u_t , \hat{d}_{t-1}+\mu_{t-1}) \text{ at } z_{t-1}=\chi(u_t , \hat{d}_{t-1}+\mu_{t-1}) }  \\
    K_{t} = & \big( C_{t}^T \Sigma_{\nu,t}^{-1}  C_{t} + (P_{t-1} + \Sigma_{\omega,t-1})^{-1}\big)^{-1}
     C_{t}^T \Sigma_{\nu,t}^{-1},
\end{array}  
\end{equation*}
where $C_t$ extends the sensitivity $\nabla_d c(u_t , \hat{d}_{t-1}+\mu_{t-1})$ of the measurement function $c(u,d)=\bar{c}(\chi(u,d))$ to approximations $z_{t-1}$ that do not necessarily satisfy the power flow equations, i.e., $h(z_{t-1},u_t ,\hat{d}_{t-1}+\mu_{t-1}) \neq 0$. 

\subsection{Online power flow solver block}\label{sec:PFsolver}

The combined OFO \eqref{eq:oper}, estimation \eqref{eq:KFsimple} and power flow solver \eqref{eq:rootfind} is a two-time-scale system: {Newton's method \eqref{eq:rootfind} is executed until convergence before} the optimization and estimation steps \eqref{eq:oper} and \eqref{eq:KFsimple} are executed. However, performing a large number of Newton iterations may jeopardize the real-time feasibility of the controller. 
% on a slow time scale with steps $t$, and \eqref{eq:rootfind} on a faster time scale to compute the power flow solutions. Since executing \eqref{eq:rootfind} until convergence can take any arbitrary number of iterations, these two time scales compromise the real-time application of this scheme, see Remark~\ref{rem:ofort}. 
Instead, we leverage the recently proposed sensitivity-conditioning approach \cite{picallo2021predictivesensitivity} to design an \emph{approximate} online power flow solver in \eqref{eq:rootfindPSint}. This solver is computationally efficient, runs simultaneously with \eqref{eq:oper} and \eqref{eq:KFsimple}, i.e., as $u_{t},\hat{d}_{t}$ transition to $u_{t+1},\hat{d}_{t+1}$, and requires a limited number of iterations per time-step. {In practice, a single iteration suffices to ensure convergence of the approach, as discussed later in Remark~\ref{rem:smallN}. Hence, we first present a single-iteration operator $\pfop{}(\cdot)$, and explain the general $N$-iterations $\pfop{}_N(\cdot)$ later in the analysis Section~\ref{sec:conv}.} 
%As first step to derive the method, 

Consider a first-order approximation of $h(z_{t+1},u_{t+1},\hat{d}_{t+1})$ around $(z_t,u_{t},\hat{d}_{t})$:
% \begin{equation*}\begin{array}{l}
% h_{t+1} \approx \\
% h_t + \nabla_x h_t (z_{t+1} \hspace{-0.05cm} - \hspace{-0.05cm} z_{t}) + \nabla_u h_t (u_{t+1} \hspace{-0.05cm} - \hspace{-0.05cm} u_{t}) + \nabla_d h_t (\hat{d}_{t+1} \hspace{-0.05cm} - \hspace{-0.05cm} \hat{d}_{t}), 
% \end{array}
% \end{equation*}
\begin{equation*}\begin{array}{l}
h(z_{t+1},u_{t+1},\hat{d}_{t+1}) \approx h(z_{t},u_{t},\hat{d}_{t})
+ \nabla h(z_{t},u_{t},\hat{d}_{t}) \left[ \hspace{-0.1cm} \begin{smallmatrix} 
     {z_{t+1}-z_{t}} \\ 
     {u_{t+1}-u_{t}} \\
     {\hat{d}_{t+1}-\hat{d}_{t}} \end{smallmatrix} \hspace{-0.1cm} \right] \hspace{-0.05cm}
\end{array}
\end{equation*}
{The operator $\pfop{}(\cdot)$ in \eqref{eq:rootfindPSint} corresponds to setting this approximation to $0$ to derive an update rule for $z_{t+1}$ that approximates $\chi(u_{t+1},\hat{d}_{t+1})$, i.e.,
%so that $h(z_{t+1},u_{t+1},\hat{d}_{t+1}) \approx 0$:
\begin{equation}\label{eq:taylorstep}\begin{array}{l}
     \pfop{}(z_{t},u_{t+1},u_{t},\hat{d}_{t+1},\hat{d}_t) = \\
    z_t - (\nabla_x h_t)^{-1} \big(h_t + \nabla_u h_t (u_{t+1}-u_{t}) + \nabla_d h_t (\hat{d}_{t+1}-\hat{d}_{t}) \big),
\end{array}
\end{equation}
where $h_{t}=h(z_{t},u_{t},\hat{d}_{t})$, and similarly for $\nabla h_{t}$. The operator $\pfop{}(\cdot)$ corresponds to a Newton iteration \eqref{eq:rootfind} with additional sensitivity-conditioning terms based on the variations $u_{t+1}-u_{t}$ and $\hat{d}_{t+1}-\hat{d}_{t}$, and the sensitivities $\nabla_u \chi(u_d,\hat{d}_{t})= - (\nabla_x h_t)^{-1}\nabla_u h_t$ and $\nabla_d \chi(u_d,\hat{d}_{t})= - (\nabla_x h_t)^{-1}\nabla_d h_t$ \cite{picallo2021predictivesensitivity}. The operator $\pfop{}(\cdot)$ updates $z_t$, while simultaneously compensating for any variations in  $u_t$ and $\hat{d}_t$ due to the OFO and estimation steps \eqref{eq:oper} and \eqref{eq:KFsimple}.} 

A complete step of the combined OFO and estimation with online power flow solvers is summarized in Algorithm~\ref{alg:optest}, where $z_t$ approximates $\hat{x}_t=\chi(u_t,\hat{d}_t)$, and $z_{t}^{\text{pr}}$ the intermediate prior estimation $\chi(u_t,\hat{d}_{t-1}+\mu_t)$.

\begin{algorithm}[H]\label{alg:optest}
\caption{OFO and estimation with online power flow solver (blue block in Fig.~\ref{fig:bldiag}) at time $t$}\label{alg:optest}
\begin{algorithmic}[1]
\STATE \textbf{Measure from grid:} $y_{t},\munderbar{u}_t,\bar{u}_t,u_t$ % (= [u_{t-1}^+]^{\bar{u}_{t}}_{\munderbar{u}_{t}})$
%\STATE $$
\STATE \eqref{eq:rootfindPSint}: $z_{t}^{\text{pr}} \hspace{-0.05cm} = \hspace{-0.05cm} \pfop{}(\hspace{-0.05cm}z_{t-1}, u_t , u_{t-1} ,\hat{d}_{t-1}\hspace{-0.05cm}+\hspace{-0.05cm} \mu_{t-1}, \hat{d}_{t-1}\hspace{-0.05cm})$
\STATE \eqref{eq:KFsimple}: $(\hat{d}_{t},P_t) =E(t,\hat{d}_{t-1}+\mu_{t-1},P_{t-1},u_t,z_{t}^{\text{pr}},y_t)$ %$[\hat{d}_{t},P_t] = E(\hat{d}_{t-1}+\mu_{t},u_t,z_{t}^{\text{pr}},y_t)$ 
%$\hat{d}_{t} = \hat{d}_{t-1}+\mu_{t-1} + \hat{K}_{t}\big(y_{t} - c(\hat{x}_{t-1}^+)\big)$ \label{algstep:est}
\STATE \eqref{eq:rootfindPSint}: $z_{t} = \pfop{}(z_{t}^{\text{pr}}, u_t , u_t ,\hat{d}_{t},\hat{d}_{t-1}+\mu_{t-1})$ \label{algstep:PF2}
\STATE \eqref{eq:oper}: $u_{t}^+ = T(u_t , z_{t}, \hat{\theta}_{t}) $ \label{algstep:opt} 
\STATE \textbf{Output:} $u_{t}^+$ 
\end{algorithmic}
\end{algorithm}

\section{Convergence and Stability Analysis}\label{sec:conv}

In this section, we analyze the convergence and stability of Algorithm~\ref{alg:optest}, including the OFO \eqref{eq:oper}, estimation \eqref{eq:KFsimple}, and online power flow solver \eqref{eq:rootfindPSint}. For this, we need some concepts from nonlinear control \cite[Ch.4]{khalil2002nonlinear}:
\begin{defi}[Comparison functions]
A $\mathcal{K}$-function $\alpha(\cdot)$ is continuous, strictly increasing and satisfies $\alpha(0)=0$. %If $\alpha(\cdot)$ is also unbounded, i.e., $\lim_{r \to \infty} \alpha(r) = \infty$, it is a $\mathcal{K}_{\infty}$-function.
A $\mathcal{L}$-function $\sigma(\cdot)$ is continuous, strictly decreasing and satisfies $\lim_{s \to \infty} \sigma(s) = 0$. 
A $\mathcal{KL}$-function $\beta(\cdot,\cdot)$ satisfies: $\beta(\cdot,s)$ is a $\mathcal{K}$-function for any $s$, and $\beta(r,\cdot)$ is a $\mathcal{L}$-function for any $r$.
    % \item Input-to-State-Stability (ISS) \cite{jiang2001input}: A system $z_{t+1}=f(z_t,v_t)$ with $f(0,0)=0$ is ISS if there exists a $\mathcal{KL}$-function $\beta(\cdot)$, and a $\mathcal{K}$-function $\gamma(\cdot)$ such that
    % \begin{equation}
    %     \norm{z_t}_2 \leq \beta(\abs{z_0}_2,t) + \gamma(\sup)
    % \end{equation}
\end{defi}

%A complete step of this approach can be expressed via the following Algorithm~\ref{alg:optest}, where $z_t$ approximates $\hat{x}_t=\chi(u_t,\hat{d}_t)$, and $z_{t}^{\text{pr}}$ the intermediate prior estimation $\chi(u_t,\hat{d}_{t-1}+\mu_t)$:}
% \noindent{.} %\comment{alg extra with close loop with system}

\subsection{Online power flow convergence}\label{subsec:hconv}

First, we analyze {the convergence of the power flow approximate $z_t$, when using the online power flow solver $\pfop{}(\cdot)$ \eqref{eq:rootfindPSint} in of Algorithm~\ref{alg:optest}. Since the power flow equations $h(\cdot)$ are nonlinear, the approximation error in \eqref{eq:taylorstep} may be arbitrarily large depending on $h_t$, and the variations $u_{t+1}-u_{t}, \hat{d}_{t+1}-\hat{d}_{t}$. Therefore, for the analysis of $\pfop{}(\cdot)$, \r{consider an operator $\pfop{}_N(\cdot)$ that results from splitting $\pfop{}(\cdot)$ in $N$ steps to refine the approximation in \eqref{eq:taylorstep}. More concretely,} define the intermediate values $u_{t,k} =
{u_{t}} + \frac{k}{N}({u_{t+1}} - {u_{t}} )$ for $k \in \{0,...,N\}$, similarly $\hat{d}_{t,k}$, and define the operator $\pfop{}_N(\cdot)$
% \begin{equation}\label{eq:rootfindPSint}
% z_{t+1} = \pfop{}_N(z_{t},{u_{t+1}},{u_{t}},\hat{d}_{t+1},\hat{d}_{t}),
% \end{equation}
consisting of $N$ steps like \eqref{eq:taylorstep}, \r{but each} multiplied by a factor $\frac{1}{N}$:}
% \begin{equation*}\begin{array}{rl}
%      z_{t,k+1} 
%     %  = & \pfop{}(k,N,z_{t,k},{u_{t+1}},{u_{t}},\hat{d}_{t+1},\hat{d}_{t}) \\
%      = & z_{t,k} - \frac{1}{N} (\nabla_x h_{t,k})^{-1} \big( h_{t,k} + \nabla_u h_{t,k} ({u_{t+1}}-{u_{t}}) \\
%     & \hspace{3.3cm} + \nabla_d h_{t,k} (\hat{d}_{t+1}-\hat{d}_{t}) \big),
% \end{array}
% \end{equation*}
% {where $z_{t,k}$ are intermediate values within the operator $\pfop{}_N(\cdot)$, $z_{t}=z_{t,0}$ and $z_{t+1}=z_{t,N}=\pfop{}_{N}(z_{t},u_{t+1},u_{t},\hat{d}_{t+1}, \hat{d}_t)$.}

\begin{algorithm}[H]\label{alg:onlinepf}
\caption{Online power flow solver operator $\pfop{}_N(\cdot)$}\label{alg:onlinepf}
\begin{algorithmic}[1]
\REQUIRE $N,z_{t},u_{t+1},u_{t},\hat{d}_{t+1},\hat{d}_t$
\STATE $z_{t,0}=z_t$
\FOR{$k =0 \to N-1$}
\STATE $h_{t,k} = h(z_{t,k},{u_{t,k}}, {\hat{d}_{t,k}})$, $\nabla h_{t,k} = \nabla h(z_{t,k},{u_{t,k}}, {\hat{d}_{t,k}})$
\STATE $z_{t,k+1} 
     = z_{t,k} \hspace{-0.05cm} - \hspace{-0.05cm} \frac{1}{N} (\nabla_x h_{t,k})^{-1} \hspace{-0.05cm} \big( h_{t,k} \hspace{-0.05cm} + \hspace{-0.05cm} \nabla_u h_{t,k} ({u_{t+1}} \hspace{-0.05cm} - \hspace{-0.05cm} {u_{t}})$ \\ 
    $\hspace{4.5cm} + \nabla_d h_{t,k} (\hat{d}_{t+1} \hspace{-0.05cm} - \hspace{-0.05cm} \hat{d}_{t}) \big)$
% \STATE $\begin{array}{rl}
%      z_{t,k+1} 
%      = & z_{t,k} \hspace{-0.05cm} - \hspace{-0.05cm} \frac{1}{N} (\nabla_x h_{t,k})^{-1} \hspace{-0.05cm} \big( h_{t,k} \hspace{-0.05cm} + \hspace{-0.05cm} \nabla_u h_{t,k} ({u_{t+1}} \hspace{-0.05cm} - \hspace{-0.05cm} {u_{t}}) \\
%     & \hspace{3.cm} + \nabla_d h_{t,k} (\hat{d}_{t+1} \hspace{-0.05cm} - \hspace{-0.05cm} \hat{d}_{t}) \big)
% \end{array}$
\ENDFOR
\RETURN  $z_{t,N}$ 
\end{algorithmic}
\end{algorithm}

% The operator $\pfop{}_N(\cdot)$ corresponds to a $\frac{1}{N}$-step-size Euler-forward integration \cite{atkinson2008introduction} of a continuous-time version of Newton's method \eqref{eq:rootfind} with sensitivity-conditioning terms \cite{picallo2021predictivesensitivity} to compensate the temporal variation of $u_t$ and $\hat{d}_t$, see Appendix~\ref{app:rootfindpredsens}. 
% see Appendix in the extended version \cite{}.

% {The operator $\pfop{}_N(\cdot)$ is based on a continuous-time version of Newton's method \eqref{eq:rootfind} with sensitivity-conditioning \cite{picallo2021predictivesensitivity}, see Appendix~\ref{app:rootfindpredsens} for more details. It is essentially a time discretization of this continuous-time approach with $N$ time steps of size $\frac{1}{N}$ \cite{atkinson2008introduction}. Hence, a $\pfop{}_N(\cdot)$ with a sufficiently large $N$, i.e., a discretization with more and shorter steps, preserves the continuous-time convergence:} % of the sensitivity-conditioning approach: %converges to a neighborhood of the power flow solution:
% {This $N$-step approximation can be interpreted as a time discretization of a continuous-time version of Newton's method \eqref{eq:rootfind} with sensitivity-conditioning \cite{picallo2021predictivesensitivity}, see Appendix~\ref{app:rootfindpredsens} for more details.}  
{This $N$-step approximation can be interpreted as a time discretization of a continuous-time version of Newton's method \eqref{eq:rootfind} with sensitivity-conditioning \cite{picallo2021predictivesensitivity}, see the Appendix for details.
For a sufficiently large $N$, $z_t$ converges to $x_t$:}
    
\begin{prop}\label{prop:hconv} 
Under Assumption~\ref{ass:existsol}, there exists $R_h,L_h>0$ such that if $\normsz{h(z_{0},u_{0},\hat{d}_{0})}_2 \leq R_h$, $\frac{L_h}{N}\leq R_h $, {and $F_N(\cdot)$ is used in Algorithm~\ref{alg:optest} instead of $F(\cdot)$}; then $\nabla_x h(z_{t},u_{t},\hat{d}_{t})$ is invertible for all $t$, and exists a $\mathcal{KL}$-function $\beta(\cdot)$ such that
\begin{equation}\label{eq:hconv}\begin{array}{rl}
    \normsz{h(z_{t},u_{t},\hat{d}_{t})}_2 \leq & \max \big( \beta(\normsz{h(z_{0},u_{0},\hat{d}_{0})}_2,t),\tfrac{L_h}{N} \big), % \\
    %  \underset{t \to \infty}{\to} & 
\end{array}
\end{equation}
and hence $\lim_{t \to \infty} \normsz{h(z_{t},u_{t},\hat{d}_{t})}_2 \leq \tfrac{L_h}{N}$.

% $\normsz{h(z_{t},u_{t},\hat{d}_{t})}_2$ is strictly decreasing on $t$ while $\normsz{h(z_{t},u_{t},\hat{d}_{t})}_2>\frac{L_h}{N}$, and $\lim_{t \to \infty} \normsz{h(z_{t},u_{t},\hat{d}_{t})}_2 \leq \frac{L_h}{N}$.} 
%satisfies $\lim_{t \to \infty} \normsz{h(z_{t},u_{t},\hat{d}_{t})}_2 \leq \frac{L_h}{N}$, and is strictly decreasing on $t$ }
 %there exist $\epsilon \in (0,1)$ and a positive integer $N$, such that $\normsz{h(z_{t+1},u_{t+1},\hat{d}_{t+1})}_2 \leq \epsilon \normsz{h(z_{t},u_{t},\hat{d}_{t})}_2$ if $\normsz{h(z_{t},u_{t},\hat{d}_{t})}>0$, and $\normsz{h(z_{t+1},u_{t+1},\hat{d}_{t+1})}_2 \leq R_h$ otherwise, where $z_{t+1}=\pfop{}_{N}(z_{t}, \allowbreak u_{t+1},u_{t},\hat{d}_{t+1},\hat{d}_t)$.  
\end{prop}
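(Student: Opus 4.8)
The plan is to track how the residual $\normsz{h(z_t,u_t,\hat{d}_t)}_2$ (write $h_t := h(z_t,u_t,\hat{d}_t)$) evolves across one execution of the $N$-step operator $\pfop{}_N(\cdot)$, and to show that each time-step contracts it by a fixed factor up to an $O(1/N)$ floor. First I would extract uniform constants from the problem data. Since $\mathcal{U}$ and $\mathcal{D}$ are compact and $h(\cdot)$ is smooth with $\nabla_x h$ invertible at every \textit{high-voltage} solution (Assumption~\ref{ass:existsol}), a continuity-plus-compactness argument yields a radius $R_h>0$, a uniform bound on $\normsz{(\nabla_x h)^{-1}}_2$, and uniform gradient/Hessian bounds, all valid on the tube $\{(z,u,d):\ u\in\mathcal{U},\,d\in\mathcal{D},\,\normsz{h(z,u,d)}_2\le R_h\}$ around the solution manifold. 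Compactness of $\mathcal{U},\mathcal{D}$ also bounds the per-step variation $\Delta_t=\normsz{u_{t+1}-u_t}_2+\normsz{\hat{d}_{t+1}-\hat{d}_t}_2$ by some $\Delta$.

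The core computation is a first-order expansion of the substep residual $r_{t,k}=h(z_{t,k},u_{t,k},\hat{d}_{t,k})$. Expanding $r_{t,k+1}$ about $(z_{t,k},u_{t,k},\hat{d}_{t,k})$ and substituting the Algorithm~\ref{alg:onlinepf} update, the sensitivity-conditioning terms $\nabla_u h_{t,k}(u_{t+1}-u_t)$ and $\nabla_d h_{t,k}(\hat{d}_{t+1}-\hat{d}_t)$ cancel \emph{exactly} against the parameter increments $u_{t,k+1}-u_{t,k}=\tfrac1N(u_{t+1}-u_t)$ and $\hat{d}_{t,k+1}-\hat{d}_{t,k}=\tfrac1N(\hat{d}_{t+1}-\hat{d}_t)$, leaving $r_{t,k+1}=(1-\tfrac1N)r_{t,k}+R_2$, where $R_2$ is a second-order remainder. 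Since every increment in the update is $O(1/N)$, the remainder obeys $\normsz{R_2}_2\le \tfrac{C}{N^2}(\normsz{r_{t,k}}_2^2+\Delta^2)$, giving the key substep recursion $\normsz{r_{t,k+1}}_2\le(1-\tfrac1N)\normsz{r_{t,k}}_2+\tfrac{C}{N^2}(\normsz{r_{t,k}}_2^2+\Delta^2)$.

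I would then run an induction on $t$ with the invariant $\normsz{h_t}_2\le R_h$, which keeps every iterate inside the tube so that each Jacobian $\nabla_x h_{t,k}$ is invertible and all substeps are well defined. Using $\normsz{r_{t,k}}_2\le R_h$ to absorb the quadratic term into the linear one, the substep factor becomes $1-\tfrac1{2N}$ for $R_h$ chosen small enough; telescoping over $k=0,\dots,N-1$ and summing the geometric series produces a per-time-step bound $\normsz{h_{t+1}}_2\le \lambda\,\normsz{h_t}_2+\tfrac{L_h'}{N}$ with $\lambda=(1-\tfrac1{2N})^N\le e^{-1/2}<1$ and $L_h'$ proportional to $C\Delta^2$. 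The hypothesis $\tfrac{L_h}{N}\le R_h$ makes the ball $\{\normsz{h}_2\le R_h\}$ forward invariant, closing the induction and giving invertibility for all $t$. Solving the scalar linear recursion yields $\normsz{h_t}_2\le\lambda^t\normsz{h_0}_2+\tfrac{L_h'}{N(1-\lambda)}$, which I would rewrite in the stated $\max$ form by setting $L_h=L_h'/(1-\lambda)$ and $\beta(r,t)=2\lambda^t r$ (a $\mathcal{KL}$-function); the limit claim follows as $t\to\infty$.

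The hard part will be the bookkeeping that keeps the entire trajectory inside the invertibility tube: the uniform constants must be fixed before the radius $R_h$, and the inductive invariant must simultaneously control all $N$ intermediate substep iterates $z_{t,k}$ (not merely the endpoints $z_t$), so the quadratic remainder never destabilizes the contraction. Balancing $R_h$, $L_h$ and the lower bound on $N$ implied by $\tfrac{L_h}{N}\le R_h$ so that the three conditions are mutually consistent is the delicate step; the remaining Taylor-remainder and geometric-sum estimates are routine.
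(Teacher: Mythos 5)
Your proposal is correct and follows essentially the same route as the paper's proof: a first-order expansion of the substep residual in which the sensitivity-conditioning terms cancel the parameter increments, yielding the recursion $\normsz{h_{t,k+1}}_2 \leq (1-\tfrac{1}{N})\normsz{h_{t,k}}_2 + \tfrac{L_h}{N^2}$, with compactness and smoothness supplying the uniform second-order bound and the invertibility radius $R_h$, and forward invariance of the ball $\{\normsz{h}_2 \leq \tfrac{L_h}{N}\}$ closing the argument. Your explicit telescoping over the $N$ substeps and the construction $\beta(r,t)=2\lambda^t r$ is slightly more quantitative than the paper's monotone-decrease argument, but it is the same proof in substance.
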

    
\begin{proof}
See the Appendix.
\end{proof}

Proposition~\ref{prop:hconv} establishes that if $z_{0}$ is initialized sufficiently close to the power flow solution $\chi(u_{0},\hat{d}_0)$, and the number of steps $N$ of the online power flow solver $F_N(\cdot)$ is sufficiently large, then Algorithm~\ref{alg:optest} is well-defined, because the Jacobian $\nabla_x h(\cdot)$ used in the operators $T(\cdot),E(\cdot),\pfop{}_N(\cdot)$ remains invertible. Furthermore, since the right-hand-side of \eqref{eq:hconv} decreases as $t$ and $N$ increase, the online power flow solver $\pfop{}_N(\cdot)$ brings the approximation $z_t$ to an arbitrarily small neighborhood of the solution $\hat{x}_t=\chi(u_t,\hat{d}_t)$. \r{This is only possible if the number of internal steps $N$ can be sufficiently large. Hence, Algorithm~\ref{alg:optest}} preserves the convergence of a faster time-scale Newton's method \eqref{eq:rootfind}, independently of the temporal variations of $u_t$ and $\hat{d}_t$, and we can safely use the approximate $z_t$ in Algorithm~\ref{alg:optest}, instead of exactly computing $\chi(u_t ,\hat{d}_{t-1}+\mu_{t-1})$ and $\chi(u_t ,\hat{d}_{t})$ within the estimation \eqref{eq:KFsimple}. 

\begin{rem}\label{rem:smallN}
{The parameter $L_h$ in Proposition~\ref{prop:hconv} is the smallest upper bound on the second-order error $\mathcal{O}\big(\max(\normsz{h_t}_2^2,\normsz{u_{t+1}-u_{t}}_2^2, \allowbreak \normsz{\hat{d}_{t+1}-\hat{d}_{t}}_2^2\big)$ of the approximation in \eqref{eq:taylorstep}, see the Appendix. The constant $L_h$ is smaller for small variations of $u$ and $d$, and when $h(z,u,d) \approx 0$, i.e., $z$ is close to the \textit{high-voltage} solution $\chi(u,d)$. 
% see Appendix in the extended version \cite{}.
Hence, there is a trade-off between the number of iterations $N$, and how fast the disturbance $d$ can change or how aggressively we can control the input $u$ variations through the OFO \eqref{eq:oper}. Given fast control-loop rates, we can reasonably expect the time-variation of $d$ to be small, plus a large $B(u)\succ 0$ in \eqref{eq:qpaproxhes} ensures that any variations of $u$ are also small. In this case, as we will verify in simulation (Section~\ref{sec:test}), $N=1$ suffices to ensure Proposition~\ref{prop:hconv} is applicable and that the power flow solver \eqref{eq:rootfindPSint} is precisely on the same timescale as the OFO \eqref{eq:oper} and the dynamic estimation \eqref{eq:KFsimple}. \r{Hence, in practice we use $\pfop{}(\cdot) = \pfop{}_1(\cdot)$ in Algorithm~\ref{alg:optest}.}}
\end{rem}

\subsection{Convergence of Online Feedback Optimization}

Next, we analyze the input $u_t$ generated by OFO \eqref{eq:oper} in Algorithm~\ref{alg:optest}, when connected with the grid under time-varying conditions as shown in Fig.~\ref{fig:bldiag}. 
Let $\mathcal{U}^s(\theta)$ denote the set of stationary points of \eqref{eq:optprob} for a given $\theta$, i.e., points satisfying the Karush-Kuhn-Tucker first-order optimality conditions \cite{bertsekas1997nonlinear}, and denote the subset of strict local minima as $\mathcal{U}^*(\theta) \subseteq \mathcal{U}^s(\theta)$. %To certify the closed-loop convergence of Algorithm~\ref{alg:optest} 
{We need the following assumption for our convergence result:}
\begin{ass}\label{ass:regsolplus} For all $\theta \in \Theta$ and $u \in \mathcal{U}$: $f(\cdot)$ and $g(\cdot)$ are twice continuously differentiable; there exists $R_\mathcal{U}>0$ such that if $x$ satisfies $\normsz{h(x,u,d)}_2 < R_\mathcal{U}$, then the feasible set $\mathcal{U}^f(\theta) = [\munderbar{u},\bar{u}] \cap \{u^+ | \bar{g}(x) + \nabla_x \bar{g}(x) \nabla_x h(x,u,d)^{-1} \nabla_u h(x,u,d) (u^+ \hspace{-0.05cm} - \hspace{-0.05cm} u) \leq 0\}$ of \eqref{eq:qpaproxhes} is non-empty and satisfies the Linear Independence Constraint Qualification ({LICQ}) condition \cite{bertsekas1997nonlinear} for all $u^+ \in \mathcal{U}^f(\theta)$; and the stationary points in $\mathcal{U}^s(\theta)$ are isolated and Lipschitz continuous on $\theta$.
% \begin{itemize}[leftmargin=*]
%     \item The functions $f(\cdot)$ and $g(\cdot)$ are twice continuously differentiable.
%     \item There exists $R_\mathcal{U}>0$ such that if $x$ satisfies $\normsz{h(x,u,d)}_2 < R_\mathcal{U}$, then the feasible set $\mathcal{U}^f(\theta) = [\munderbar{u},\bar{u}] \cap \{u^+ | \bar{g}(x) + \nabla_x \bar{g}(x) \nabla_x h(x,u,d)^{-1} \nabla_u h(x,u,d) (u^+ \hspace{-0.05cm} - \hspace{-0.05cm} u) \leq 0\}$ of \eqref{eq:qpaproxhes} is non-empty and satisfies the Linear Independence Constraint Qualification ({LICQ}) condition \cite{bertsekas1997nonlinear} for all $u^+ \in \mathcal{U}^f(\theta)$.
%     \item The stationary points in $\mathcal{U}^s(\theta)$ are isolated and Lipschitz continuous on $\theta$. %\comment{revise: Lipschitz cont and strict max and saddle}
%     % \item When $x=\chi(u,d)$, the feasible set of \eqref{eq:qpaproxhes}, denoted as $\mathcal{U}^f = [\munderbar{u},\bar{u}] \cap \{u^+ | \bar{g}(x) + \nabla_x \bar{g}(x) \nabla_x h(x,u,d)^{-1} \nabla_u h(x,u,d) (u^+ \hspace{-0.05cm} - \hspace{-0.05cm} u) \leq 0\}$, is non-empty and satisfies the Linear Independence Constraint Qualification ({LICQ}) condition \cite{bertsekas1997nonlinear} for all $u^+ \in \mathcal{U}^f$.
% \end{itemize}
\end{ass}

Continuously differentiability is usually satisfied by the functions $f(\cdot),g(\cdot)$ used for an AC-OPF \eqref{eq:optprob}. The feasibility and constraint qualification assumptions are typical for these operators \cite{haberle2020non}\cite{torrisi2018projected}, and are usually satisfied when operating close to nominal conditions. Even if \eqref{eq:qpaproxhes} were not feasible, it could be relaxed as in \eqref{eq:projgrad}. %Lipschitz continuity of the stationary points is granted if they have 

In Algorithm~\ref{alg:optest}, the OFO \eqref{eq:oper} is effected by the temporal variation of $\hat{\theta}_t$ and the error between $z_t$ and $\hat{x}_t=\chi(u_t,\hat{d}_t)$. Therefore, we analyse the convergence and stability as a function of the perturbations $\normsz{\hat{\theta}_{t+1}-\hat{\theta}_{t}}_2$ and $\normsz{\hat{x}_t-z_{t}}_2$:

\begin{thm}[Local convergence and stability]\label{thm:opconvloc}%\comment{split off?}
Under Assumptions~\ref{ass:existsol} and \ref{ass:regsolplus}, there exists $\eta>0$ such that if $B(u) \succeq \eta I_d$, then: 
%and is continuous, then %for all $t>0,u_0 \in \mathcal{U}$ 
%\comment{long thm, remove part?} %\comment{add res using prop 1, to show that oper well-def}
\begin{enumerate}[leftmargin=*]
    \item\label{thmit:1} For a constant $\theta$ and using the exact solution $x=\chi(u,d)$ in \eqref{eq:oper}, a stationary point $u^s(\theta) \in \mathcal{U}^s(\theta)$ of \eqref{eq:optprob} is a strict local minimum if and only if it is an asymptotically stable equilibrium of \eqref{eq:oper}. If the stationary point $u^s(\theta)$ is not a strict local minimum, then it is unstable.
    \item\label{thmit:2} Consider a strict local minimum $u^*(\hat{\theta}_t) \in \mathcal{U}^*(\hat{\theta}_t)$. There exists $R_h,L_h,R_u,R_\theta,R_x>0$, such that if $\tfrac{L_h}{N}<R_h$, $\normsz{h(z_{0},u_0,\hat{d}_0)} \leq R_h$,
    %$B(u) \succ \eta I_d$ for $T(\cdot)$ in Algorithm~\ref{alg:optest}; 
    $\normsz{u_0-u^*(\theta_0)}_2 \leq R_u$, and $\frac{1}{R_\theta} \normsz{\hat{\theta}_{t+1}-\hat{\theta}_t}_2 + \frac{1}{R_x} \normsz{\hat{x}_{t}-z_t}_2 \leq 1$ for all $t$; then there exists $\xi \in (0,1)$ and $L>0$, such that the set-points $u_t$ generated by Algorithm~\ref{alg:optest} satisfy
\end{enumerate}
    \begin{equation*}
%     \begin{array}{rl}
%          & \normsz{u_{t+1}-u^*(\theta_{t+1})}_2 \\
%         \leq & \xi \normsz{u_{t}-u^*(\theta_{t})}_2 + L(\normsz{\hat{\theta}_{t+1}-\hat{\theta}_t}_2+\normsz{\hat{x}_{t}-z_t}_2)  \\
%         % \leq & \xi^{t+1} \normsz{u_{0}-u^*(\theta_{0})}_2 
% %        + L\frac{1}{1-\eta}\sup_{k\leq t} (\normsz{\hat{\theta}_{k+1}-\hat{\theta}_k}_2+\normsz{\hat{x}_{k}-z_k}_2)   \\
%     \underset{t \to \infty}{\to} & \frac{L}{1-\xi}\sup_{t} (\normsz{\hat{\theta}_{t+1}-\hat{\theta}_t}_2+\normsz{\hat{x}_{t}-z_t}_2) 
%     \end{array}
\begin{array}{rl}
     \normsz{u_{t}-u^*(\hat{\theta}_{t})}_2 \leq & \xi^{t} \normsz{u_{0}-u^*(\hat{\theta}_0)}_2 \\
     & + \frac{L}{1-\xi} \sup_{k < t} \big( \normsz{\hat{\theta}_{k+1} \hspace{-0.05cm} - \hspace{-0.05cm} \hat{\theta}_{k}}_2 + \normsz{\hat{x}_k \hspace{-0.05cm} - \hspace{-0.05cm} z_k}_2 \big).
\end{array}
    \end{equation*}
\end{thm}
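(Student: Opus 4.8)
The plan is to treat the idealized OFO map $\Phi_\theta(u) := T(u,\chi(u,d),\theta)$ defined by \eqref{eq:oper}--\eqref{eq:qpaproxhes} as a fixed-point iteration, to show it is a local contraction precisely around strict local minima, and then to propagate that contraction through the three perturbations that separate Algorithm~\ref{alg:optest} from this idealized iteration. The first step, shared by both claims, is to verify that the fixed points of $\Phi_\theta$ coincide with the stationary set $\mathcal{U}^s(\theta)$ of \eqref{eq:optprob}: at a point with $u^+=u$ the first-order (KKT) optimality conditions of the QP \eqref{eq:qpaproxhes}, which satisfies LICQ by Assumption~\ref{ass:regsolplus}, collapse onto the KKT conditions of \eqref{eq:optprob}, because the regularizing term $\tfrac12(u^+-u)^TB(u)(u^+-u)$ and the linearization residual both vanish there.

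For claim~\ref{thmit:1} I would analyze the linearization of $\Phi_\theta$ at a stationary point $u^s(\theta)$. Under LICQ and strict complementarity the active set is locally constant, so on a neighborhood $\Phi_\theta$ is smooth and is obtained by solving an equality-constrained QP; restricting to the tangent space $\mathcal{Z}$ of the active constraints, the iteration reduces to a metric gradient step whose Jacobian in tangent coordinates has the form $I - M^{-1}H_r$, where $H_r = Z^T\nabla^2_{uu}\mathcal{L}\,Z$ is the reduced Hessian of the Lagrangian $\mathcal{L}$, $M = Z^TB(u)Z$ the reduced metric, and $Z$ a basis of $\mathcal{Z}$. The key computation is that the spectral radius of $I - M^{-1}H_r$ is strictly below one iff $H_r\succ 0$ and $H_r\prec 2M$; choosing $\eta$ so that $B(u)\succeq\eta I_d$ makes $M$ large enough to enforce $H_r\prec 2M$ uniformly, leaving contractivity equivalent to $H_r\succ 0$, i.e. to the second-order sufficient condition for a strict local minimum. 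Lyapunov's indirect method (Ostrowski's theorem) then yields asymptotic stability when $H_r\succ 0$ and instability when $H_r$ has a nonpositive eigenvalue, which is exactly the asserted equivalence and its converse.

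For claim~\ref{thmit:2} I would first invoke Proposition~\ref{prop:hconv}: the hypotheses $\tfrac{L_h}{N}<R_h$ and $\normsz{h(z_0,u_0,\hat d_0)}\le R_h$ guarantee that $\nabla_x h(\cdot)$ stays invertible, so $T,E,\pfop{}_N$ are well-defined along the trajectory and $\normsz{\hat x_t-z_t}_2$ stays bounded. The contraction from claim~\ref{thmit:1} supplies a neighborhood and a rate $\xi\in(0,1)$ with $\normsz{\Phi_{\hat\theta_t}(u)-u^*(\hat\theta_t)}_2\le\xi\normsz{u-u^*(\hat\theta_t)}_2$, using that $u^*(\hat\theta_t)$ is a fixed point of $\Phi_{\hat\theta_t}$. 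I would then bound one step of the true iteration by the triangle inequality against this idealized step, accounting for three effects: the use of the approximation $z_t$ instead of $\hat x_t=\chi(u_t,\hat d_t)$, controlled by Lipschitz continuity of $T(\cdot)$ in its state argument (finite because $f,g$ are twice continuously differentiable and the parametric QP solution map is Lipschitz under LICQ); the drift of the minimizer, controlled by the Lipschitz dependence of $u^*(\cdot)$ on $\theta$ from Assumption~\ref{ass:regsolplus}; and the time-varying saturation onto $[\munderbar{u}_{t+1},\bar u_{t+1}]$, which is a projection onto a convex box that contains the feasible point $u^*(\hat\theta_{t+1})$ and is therefore nonexpansive and can only help. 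This yields the scalar recursion $e_{t+1}\le\xi e_t + L(\normsz{\hat\theta_{t+1}-\hat\theta_t}_2+\normsz{\hat x_t-z_t}_2)$ with $e_t=\normsz{u_t-u^*(\hat\theta_t)}_2$.

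It remains to justify this recursion at every step, i.e. to keep the iterates inside the contraction neighborhood. Using $R_u$ on the initial error together with the smallness condition $\tfrac{1}{R_\theta}\normsz{\hat\theta_{t+1}-\hat\theta_t}_2+\tfrac{1}{R_x}\normsz{\hat x_t-z_t}_2\le 1$, I would run an induction showing that $e_t$ stays below the neighborhood radius (the moving target $u^*(\hat\theta_t)$ remains within a fixed tube by its Lipschitz continuity and isolation, Assumption~\ref{ass:regsolplus}), so the one-step estimate holds for all $t$. Unrolling the recursion then gives $e_t\le\xi^t e_0+\tfrac{L}{1-\xi}\sup_{k<t}(\normsz{\hat\theta_{k+1}-\hat\theta_k}_2+\normsz{\hat x_k-z_k}_2)$, which is the stated bound. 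I expect the main obstacle to be claim~\ref{thmit:1}: the QP solution map is only piecewise smooth, so the spectral-radius computation must be performed on a fixed active set (requiring an LICQ/strict-complementarity argument to pin it down), and the equivalence between contractivity and the second-order condition must be made uniform over the neighborhood through the single lower bound $\eta$ on $B(u)$.
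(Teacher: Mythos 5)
Your treatment of claim~\ref{thmit:2} is essentially the paper's: the paper also combines a local linear-convergence rate $\xi$ for the idealized iterate $T(u,\chi(u,d),\theta)$ (cited from Torrisi et al., Thm.~4.1), Lipschitz continuity of $T(\cdot)$ in its state argument (via Prop.~\ref{prop:hconv} guaranteeing invertibility of $\nabla_x h$ and LICQ along the trajectory), Lipschitz continuity of $u^*(\cdot)$ in $\theta$, and an induction keeping $u_t$ inside the contraction ball, which is exactly where the condition $\tfrac{1}{R_\theta}\normsz{\hat\theta_{t+1}-\hat\theta_t}_2+\tfrac{1}{R_x}\normsz{\hat x_t-z_t}_2\le 1$ comes from. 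Your nonexpansiveness argument for the saturation is a clean (and valid) alternative to the paper's bound of $\normsz{u_{t+1}-u_t^+}_2$ by the variation of the limits, which are components of $\theta$.

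Claim~\ref{thmit:1} is where you diverge, and where there is a genuine gap. The paper does not linearize the QP map at all: it builds an $l_1$ exact-penalty merit function $V(u,\theta)$ with penalty parameter exceeding the supremum of the optimal multipliers, proves the uniform descent inequality $V(T(u,\chi(u,d),\theta),\theta)-V(u,\theta)\le-\nu\normsz{T(u,\chi(u,d),\theta)-u}_2^2$ for $B(u)\succeq\eta I_d$, sandwiches $\normsz{T(u,\chi(u,d),\theta)-u}_2^2$ between $\mathcal{K}$-functions of the distance to $\mathcal{U}^s(\theta)$, and uses $V(u,\theta)-V(u^*(\theta),\theta)$ as a Lyapunov function; instability of non-minimal stationary points follows because isolation implies such a point is not a local minimum, so a nearby feasible point has strictly smaller $V$, and $V$ decreases monotonically along trajectories. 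Your spectral-radius route proves something both stronger and weaker: it needs strict complementarity (not in Assumption~\ref{ass:regsolplus}) to fix the active set, and it characterizes contractivity by $H_r\succ 0$, i.e.\ the second-order \emph{sufficient} condition, which is strictly stronger than strict local minimality. A strict local minimum with singular positive-semidefinite reduced Hessian (the analogue of $f(u)=u^4$ at the origin) gives a linearization with spectral radius exactly one, where Ostrowski's theorem is inconclusive; symmetrically, a non-minimal stationary point with $H_r\succeq 0$ singular (the analogue of $f(u)=-u^4$) also yields spectral radius one, so your instability conclusion from ``$H_r$ has a nonpositive eigenvalue'' is not justified by the indirect method. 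The stated ``if and only if'' therefore cannot be obtained by linearization under the paper's hypotheses; the Lyapunov/merit-function argument is what closes these degenerate cases. Note also that this gap propagates into your claim~\ref{thmit:2}, since you derive the contraction rate $\xi$ from the part-1 spectral analysis rather than from an independent local-linear-convergence result.
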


Theorem~\ref{thm:opconvloc} establishes that if the initial set-points $u_0$ are close to a strict local minimum $u^*(\hat{\theta}_0)$, and the values $\normsz{\hat{\theta}_{t+1}-\hat{\theta}_t}_2$ and $\normsz{\hat{x}_{t}-z_t}_2$ are sufficiently small, then the set-points $u_t$ produced by Algorithm~\ref{alg:optest} converge to a neighborhood of the time-varying solution $u^*(\hat{\theta}_t)$ with finite tracking error, i.e., $\lim_{t \to \infty} \normsz{u_{t}-u^*(\hat{\theta}_{t})}_2 \leq \frac{L}{1-\xi} \sup_{t} \big( \normsz{\hat{\theta}_{t+1} \hspace{-0.05cm} - \hspace{-0.05cm} \hat{\theta}_{t}}_2 + \normsz{\hat{x}_t \hspace{-0.05cm} - \hspace{-0.05cm} z_t}_2 \big)$. Since stationary points that are not strict local minima are unstable, even if $u_{t}$ becomes trapped in one, i.e., $u_t \in \mathcal{U}^s(\hat{\theta}_t) \setminus \mathcal{U}^*(\hat{\theta}_t)$, a perturbation will free it. Further, if all stationary points are strict local minima, e.g., if there is a single global minimum, Theorem~\ref{thm:opconvloc} can be extended and we can show that convergence and stability are preserved without any requirements on the starting point $u_0$ or the perturbations:

\begin{thm}[Global convergence and stability]\label{thm:opconvglb}
    Assume the stationary points are all strict local minima, i.e., $\mathcal{U}^*(\theta)=\mathcal{U}^s(\theta)$. Under Assumptions~\ref{ass:existsol} and \ref{ass:regsolplus}, there exists $\eta, R_h,L_h>0$ such that if $B(u) \succeq \eta I_d$ and is continuous, $\tfrac{L_h}{N}<R_h$ and $\normsz{h(z_{0},u_0,\hat{d}_0)} \leq R_h$; then for all $t>0,u_0 \in \mathcal{U}$ there exists a $\mathcal{KL}$-function ${\beta}(\cdot)$ and a $\mathcal{K}$-function ${\gamma}(\cdot)$, such that the set-points $u_t$ generated by Algorithm~\ref{alg:optest} satisfy
\begin{equation*}\begin{array}{rl}
     \distset{u_t}_{\mathcal{U}^*({\hat{\theta}}_t)} \leq &  {\beta}(\distset{u_0}_{\mathcal{U}^*(\hat{\theta}_0)},t) \\
     & + {\gamma} \big(\underset{k < t}{\sup} \normsz{\hat{\theta}_{k+1} \hspace{-0.05cm} - \hspace{-0.05cm} \hat{\theta}_{k}}_2 +  \normsz{\hat{x}_k-z_k}_2 \big),
\end{array}
\end{equation*}
where $\distset{u_t}_{\mathcal{U}^*({\hat{\theta}}_t)}=\min_{u^* \in \mathcal{U}^*({\hat{\theta}}_t)} \normsz{u^*-u_t}$ denotes the distance of $u_t$ to the set $\mathcal{U}^*({\hat{\theta}}_t)$.
\end{thm}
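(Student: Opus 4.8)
The plan is to read Theorem~\ref{thm:opconvglb} as a \emph{global} input-to-state stability (ISS) statement for the $u$-subsystem of Algorithm~\ref{alg:optest}, whose two perturbation channels are the temporal variation $\normsz{\hat{\theta}_{k+1}-\hat{\theta}_k}_2$ of the problem data and the power-flow approximation error $\normsz{\hat{x}_k-z_k}_2$. I would first isolate the \emph{nominal} closed loop obtained by freezing $\theta$ and feeding the exact state $x=\chi(u,d)$ into \eqref{eq:oper}, reducing it to the autonomous map $u^+=T(u,\chi(u,d),\theta)$ on the compact set $[\munderbar{u},\bar{u}]$. The strategy is then threefold: (i) upgrade the local stability/instability dichotomy of the first part of Theorem~\ref{thm:opconvloc} to \emph{global} asymptotic stability of the minimizer set $\mathcal{U}^*(\theta)$; (ii) extract a smooth Lyapunov function by a converse theorem; and (iii) treat the algorithmic errors as bounded perturbations of the nominal map and invoke a discrete-time ISS robustness estimate to obtain the $\mathcal{KL}$-plus-$\mathcal{K}$ bound.

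For step~(i) the key is a global descent argument for the QP operator \eqref{eq:qpaproxhes}. Since $u^+=u$ is feasible whenever the current point satisfies the constraints, comparing objective values at the optimizer gives $\nabla_u f(u)^T(u^+-u)\leq-\tfrac12(u^+-u)^TB(u)(u^+-u)\leq-\tfrac{\eta}{2}\normsz{u^+-u}_2^2$, so a suitable merit function accounting for the linearized constraints is strictly decreased whenever $u^+\neq u$, provided $\eta$ is large enough for the first-order term to dominate the $\mathcal{O}(\normsz{u^+-u}_2^2)$ model error. On the compact domain this merit function is bounded below, hence $\sum_t\normsz{u_{t+1}-u_t}_2^2<\infty$ and the increments vanish; a discrete LaSalle invariance argument \cite[Ch.4]{khalil2002nonlinear} then forces every nominal trajectory into the fixed-point set of $T(\cdot)$, which is exactly $\mathcal{U}^s(\theta)$. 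Under the hypothesis $\mathcal{U}^*(\theta)=\mathcal{U}^s(\theta)$ all such limit points are strict local minima and, by the first part of Theorem~\ref{thm:opconvloc}, asymptotically stable; combining global attractivity with this local stability yields global asymptotic stability of the isolated set $\mathcal{U}^*(\theta)$, uniformly over $\theta\in\Theta$ by the Lipschitz-in-$\theta$ and compactness regularity of Assumption~\ref{ass:regsolplus}.

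For steps~(ii)--(iii) I would invoke a converse Lyapunov theorem on the compact state space to obtain $V$ with $\mathcal{K}$-function sandwich bounds $\alpha_1(\distset{u}_{\mathcal{U}^*(\theta)})\leq V(u,\theta)\leq\alpha_2(\distset{u}_{\mathcal{U}^*(\theta)})$ and a strict nominal decrease. The update actually executed in Algorithm~\ref{alg:optest} differs from the nominal map only through the substitution of $z_t$ for $\chi(u_t,\hat{d}_t)$ and through the motion of the target set as $\hat{\theta}_t$ varies. Strong convexity of \eqref{eq:qpaproxhes} from $B(u)\succeq\eta I_d$, together with the twice-differentiability of $f,g$ and the LICQ regularity of Assumption~\ref{ass:regsolplus}, makes the QP solution map Lipschitz in its arguments, so these two effects perturb $V(u_{t+1},\hat{\theta}_{t+1})$ by a continuous $\mathcal{K}$-function of $\normsz{\hat{x}_t-z_t}_2+\normsz{\hat{\theta}_{t+1}-\hat{\theta}_t}_2$; meanwhile Proposition~\ref{prop:hconv} keeps $z_t$ in the region where $\nabla_x h(\cdot)$ is invertible, so all operators remain well defined for every $t$. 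A standard discrete-time ISS estimate then converts the perturbed Lyapunov decrease into the advertised bound, with $\beta$ collecting the nominal transient and $\gamma$ the perturbation gain.

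The main obstacle I anticipate is step~(i): turning the purely local statement of Theorem~\ref{thm:opconvloc} into genuinely global asymptotic stability of the possibly non-singleton set $\mathcal{U}^*(\theta)$. This demands a descent certificate valid far from the minimizers and a LaSalle argument that simultaneously excludes oscillatory limit behavior and rules out convergence to any spurious stationary point --- the latter being precisely what the hypothesis $\mathcal{U}^*(\theta)=\mathcal{U}^s(\theta)$ eliminates. Carrying this out \emph{uniformly} over the compact parameter set $\Theta$, so that the resulting $\beta$ and $\gamma$ are independent of $\theta$, is the delicate bookkeeping that makes the global result genuinely stronger than a mere patching of the local bound in Theorem~\ref{thm:opconvloc}.
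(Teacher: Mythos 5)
Your proposal is correct in its overall architecture and matches the paper's proof in its first half: the paper also establishes a global descent inequality for an exact $l_1$ merit function $V(u,\theta)=f(u)+\zeta\big(\sum_i[g_i(u,d)]_0^\infty+\dots\big)$ with $\zeta$ exceeding the supremum of the optimal QP multipliers over the compact set $\mathcal{U}\times\Theta$, invokes the discrete-time invariance principle to drive nominal trajectories into the fixed-point set $\mathcal{U}^s(\theta)$, and uses the hypothesis $\mathcal{U}^*(\theta)=\mathcal{U}^s(\theta)$ together with the local stability from Theorem~\ref{thm:opconvloc} to conclude global asymptotic stability. Where you diverge is the final ISS step: you propose a converse Lyapunov theorem followed by a perturbation analysis of the resulting $V$, whereas the paper never constructs a new Lyapunov function. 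It instead keeps the explicit merit function, lower-bounds its one-step decrement by $\nu\alpha_1(\distset{u_t}_{\mathcal{U}^*(\hat{\theta}_t)})$ using $\mathcal{K}$-functions sandwiching $\normsz{T(u,\chi(u,d),\theta)-u}_2^2$, telescopes to obtain the \emph{limit property} $\inf_t\distset{u_t}_{\mathcal{U}^s(\hat{\theta}_t)}\leq\sup_t\alpha^{-1}\big(\tfrac{L_V}{\nu}(\normsz{\hat{\theta}_{t+1}-\hat{\theta}_t}_2+L_T\normsz{\hat{x}_t-z_t}_2)\big)$, and then cites the equivalence ``0-GAS $+$ limit property $\Rightarrow$ ISS'' for discrete-time systems. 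The paper's route buys two things your sketch leaves exposed. First, the uniformity over $\Theta$ that you rightly flag as delicate is handled by an explicit construction: the pointwise $\mathcal{K}$-functions $\alpha_{i,\theta}$ are extended to $\mathcal{K}_\infty$ and then min/maxed over the compact $\Theta$, with a dedicated lemma showing the result is still $\mathcal{K}_\infty$; a converse-Lyapunov $V$ would additionally require you to prove Lipschitz (or at least uniformly continuous) dependence on $\theta$, which standard converse theorems do not hand you, while the explicit merit function is manifestly Lipschitz in $\theta$.

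Second, a concrete caveat on your descent inequality: comparing QP objective values at $u^+=u$ requires $u$ to be feasible for the linearized constraint in \eqref{eq:qpaproxhes}, which generally fails during operation (constraint violations are exactly what the scheme must recover from). The correct global certificate is the SQP merit-function descent lemma (\cite[Lemma~18.2]{nocedal2006numerical}), which yields $V(u^+,\theta)-V(u,\theta)\leq-(\lambda_{\min}(B(u))-L_f-L_g)\normsz{u^+-u}_2^2$ for any $u\in\mathcal{U}$ provided $\zeta$ dominates the dual variables — this is why the paper needs $\bar{\zeta}$ finite (continuity of the multipliers plus compactness) and $\eta$ large enough. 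You gesture at ``a suitable merit function accounting for the linearized constraints,'' so the idea is present, but the specific inequality you quote is not the one that closes the argument; without the exact-penalty mechanism the descent claim does not hold at infeasible iterates.
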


\begin{proof}
See the Appendix for both Theorems.
\end{proof}

% {For an exact power flow solution $z=\chi(u,d)$,} the dynamic estimation \eqref{eq:KF} is exponentially bounded in mean square, i.e., the expectation of the norm $\normsz{\hat{d}_{t}-d_{t}}_2$ converges to a bounded set under some mild technical conditions \cite{reif1998ekf}. Among others, this holds for small enough $\Sigma_{\omega,t},\Sigma_{\nu,t}$, an initialization $\hat{d}_{0}$ close to $d_{0}$, and enough independent measurements in $\bar{c}(x)$. Assuming that this property holds given the diminishing error $\normsz{\chi(u_t,d_t)-z_t} \to 0$, the term...  %Then, since $u$ belongs to a compact set, convergence is preserved independently of $u$.

\section{Test Case}\label{sec:test}
In this section, we simulate and validate the proposed combined OFO and estimation approach in Algorithm~\ref{alg:optest} on a benchmark test feeder. We compare it against a projected gradient descent operator based on \eqref{eq:projgrad} used in \cite{picallo2020seopfrt} during a 1-hour simulation with 1-second sampling period. % i.e., 1 second from  step $t$ to $t+1$.

\begin{figure}[t]
\centering
\includegraphics[width=8cm,height=5.5cm]{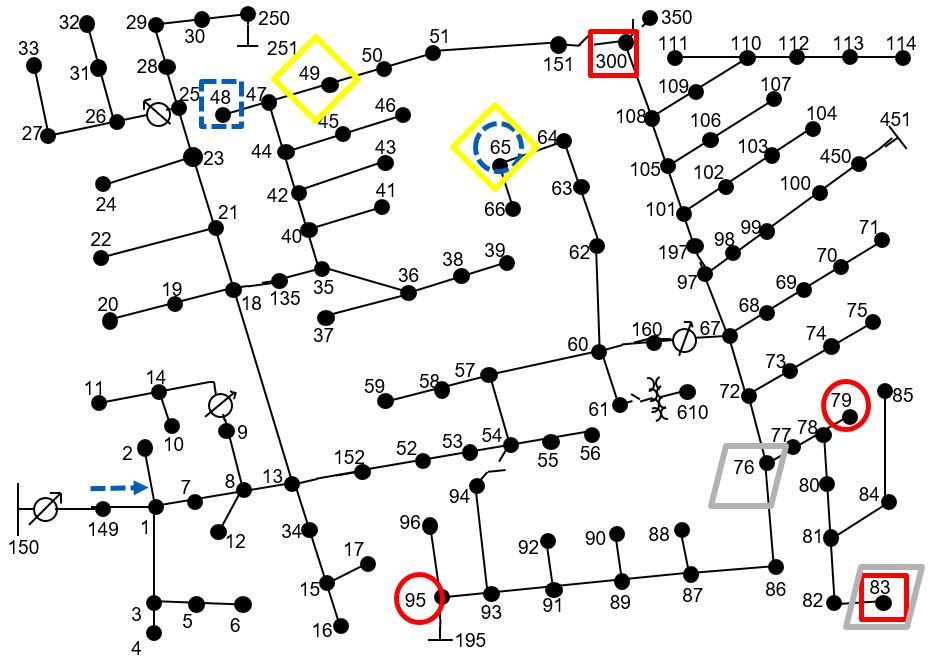}    % The printed column width is 8.4 cm.
\caption{IEEE 123-bus test feeder \cite{kersting1991radial}. \textbf{Measurements:} red circle~=~voltage phasor, red square~=~voltage magnitude, blue dashed circle~=~current phasor, blue dashed square~=~current magnitude, blue dashed arrow~=~line current phasor. \textbf{Distributed generation:} yellow diamond~=~solar, grey parallelogram~=~wind.} 
\label{fig:123bus}
%\vspace{-0.6cm}
\end{figure}

\subsection{Simulation settings}\label{subsec:simDef}

\begin{itemize}[leftmargin=*]
\item System and measurements $y$: We use the 3-phase, unbalanced IEEE 123-bus test feeder \cite{kersting1991radial} (see Fig.~\ref{fig:123bus}). As in \cite{picallo2020seopfrt}, actual measurements are placed at different locations in the grid. 
%Apart from them, distribution grids state estimation typically incorporate so-called pseudo-measurements \cite{schenato2014bayesian, picallo2017twostepSE}, based on load predictions or profiles, 
To complement these measurements and achieve numerical observability \cite{baldwin1993power}, pseudo-measurements are built by averaging the loads profiles in $d$. %Actual measurements are assigned a relatively low uncertainty value (1\% standard deviation) in the corresponding covariance terms in $\Sigma_\nu$, while pseudo-measurement a relatively large value (50\% standard deviation) due to their higher uncertainty.

\item Load profiles for $d$: We use $1$-second resolution data of the ECO data set \cite{ECOdata}, aggregate households and scale them to the base loads of the 123-bus feeder. Hence, we do not base our simulation on a stochastic process like \eqref{eq:stochproc}, but rather use real-world data to make our simulation more realistic. We use $\mu_t=0$ to test the approach in more adverse conditions, instead of estimating $\mu_t$ {through, e.g., time series analysis}.

\item Controllable inputs $u$: As in \cite{picallo2020seopfrt}, solar energy and wind energy is introduced in the three phases the buses indicated in Figure~\ref{fig:123bus}. Their profiles are generated based on a $1$-minute solar irradiation profile and a $2$-minute wind speed profiles from \cite{solarprofile, windprofile}. Generation is assumed constant between samples. We use these profiles to set the time-varying upper and lower limits of the feasible set, $\bar{u}_{t}$ and $\munderbar{u}_{t}$ respectively, and set the lower limit of active generation to $\munderbar{u}_{t}=0$. 

\item Functions $f(u)$ and $g(u,d)$ in the AC-OPF \eqref{eq:optprob}: We use a quadratic cost that penalises deviating from a reference: $f(u)=\frac{1}{2}\normsz{u-u_\text{ref}}_2^2$. As reference $u_\text{ref}$ we use $1$p.u. for the voltage magnitude at the slack bus, and the maximum installed power as reference for the controllable active generation, to promote using as much renewable energy as possible. We consider the voltage limits $\abs{U}_i \in [0.94 \text{p.u.},1.06 \text{p.u.}]$ for all nodes $i$ as in \cite{hauswirth2017onlinePF,picallo2020seopfrt}, and thus use $g(u,d)=\bar{g}(\chi(u,d)) = \left[\begin{smallmatrix} I_d & 0 \\ -I_d & 0 \end{smallmatrix}\right] \chi(u,d) + \left[\begin{smallmatrix} -1.06 \\ 0.94 \end{smallmatrix}\right]  \leq 0$, where $I_d$ is the identity matrix. Note that $f(\cdot)$ is strongly convex and twice continuously differentiable, as requested in Assumption~\ref{ass:regsolplus}. %Also, the time-varying element $\bar{u}$ in $f(u)$ is the same as in the set $[\munderbar{u},\bar{u}]$, hence Prop.~\ref{prop:opconvPS} still holds. 
Furthermore, while $x$ remains close to the \textit{high-voltage} solution $\chi(u,d)$, $\nabla_x h(\cdot)$ is invertible, and thus $g(u,d)$ is also continuously differentiable. 

\item Matrix $B(u)$ in \eqref{eq:qpaproxhes}: We use $B(u)=\eta I_d$, with $\eta>0$ large enough to ensure that the corresponding condition in Theorems~\ref{thm:opconvloc} and \ref{thm:opconvglb} is satisfied. Experimentally, we have observed that $\eta=5$ suffices to preserve convergence.

% \item Steps $N$ in \eqref{eq:rootfindPSint}: Experimentally, we have observed that a single step $N=1$ suffices to preserve the convergence in Proposition~\ref{prop:hconv}, see Remark~\ref{rem:smallN}. 
\end{itemize}

\begin{figure}
\centering
\includegraphics[width=8.8cm,height = 4.5cm]{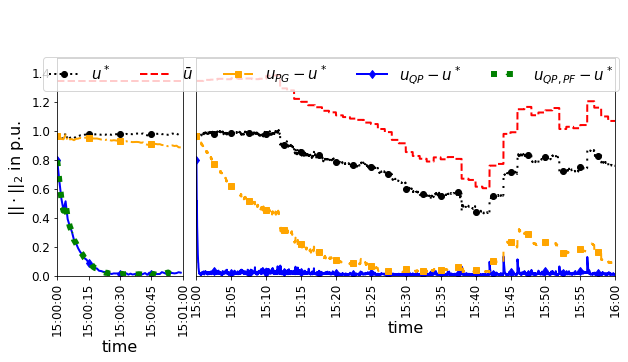}    % The printed column width is 8.4 cm.
\caption{Euclidean norm of the AC-OPF solution $u^*$, the maximum power available $\bar{u}$, and the difference between $u^*$ and the set-points $u_{QP}$ produced by Algorithm~\ref{alg:optest} using the quadratic operator \eqref{eq:qpaproxhes}, or $u_{PG}$ produced the projected gradient descent based on \eqref{eq:projgrad}. On the right, in a zoom-in during the first minute, {we also show this difference when using Algorithm~\ref{alg:optest}, but solving the power flow until convergence $u_{QP,PF}$.}} 
\label{fig:uerror}
\end{figure}

\begin{figure}
\centering
\includegraphics[width=8.8cm,height = 3cm]{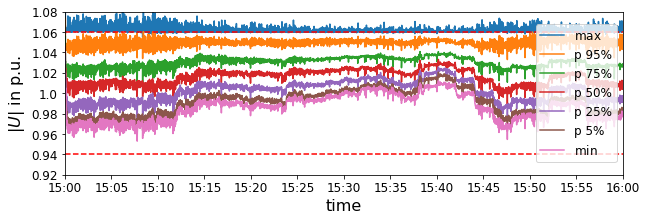}    % The printed column width is 8.4 cm.
\caption{Percentiles of the true voltage magnitude in $x$ during the simulation.}
\label{fig:Vtime}
\end{figure}

\subsection{Results}
We monitor the optimization error $\normsz{u_{t}^*-u_{t}^+}_2 $ over the simulation time, where $u_{t}^*=u^*(\theta_{t})$ is computed by solving \eqref{eq:optprob} given $\theta_t$, i.e., the power limits $\munderbar{u}_t,\bar{u}_{t}$, and the actual disturbance $d_{t}$. We consider the input $u_t^+$ generated by Algorithm~\ref{alg:optest} using both the quadratic operator $T(\cdot)$ in \eqref{eq:qpaproxhes}, and a projected gradient descent operator based on \eqref{eq:projgrad} as in \cite{picallo2020seopfrt}, denoted respectively as $u_{QP}$ and $u_{PG}$.

In Figure~\ref{fig:uerror} we observe how the power available $\bar{u}$ and thus the AC-OPF solution $u^*$ vary over time. We observe that with both operators the optimization errors decrease to close to $0$, but not exactly $0$, due to the effects of the time-varying conditions in $\theta$ and the estimation error $\normsz{d_t-\hat{d}_t}_2$, see Theorems~\ref{thm:opconvloc} and \ref{thm:opconvglb}. Yet, while the inputs $u_{QP}$ of quadratic operator \eqref{eq:qpaproxhes} need only around $30$ seconds (hence time steps) to converge, the projected gradient descent $u_{PG}$ requires about $30$ minutes. A plausible explanation for this slow convergence is that the large $\rho \gg 0$ in \eqref{eq:projgrad} (required to avoid constraint violations) increases the Lipschitz constant of the cost function $\phi(\cdot)$ in \eqref{eq:projgrad}, and hence limits the projected gradient descent step size. On the other hand, the quadratic programming operator \eqref{eq:qpaproxhes} can handle constraints directly and efficiently. This can be observed again from 15:45 to 16:00, when suddenly more energy $\bar{u}$ becomes available. Then, the total amount of energy used by $u^*$ increases, the quadratic operator $u_{QP}$ adapts quickly to the new $u^*$, while the projected gradient descent $u_{PG}$ requires more time. {We also observe in the zoom-in in Figure~\ref{fig:uerror}, that solving the power flow until convergence does not improve the performance significantly, which verifies that our single-iteration online power flow solver \eqref{eq:rootfindPSint} is sufficient to ensure the convergence of Algorithm~\ref{alg:optest}, see Remark~\ref{rem:smallN}.}

Additionally, in Figure~\ref{fig:Vtime} we observe how the voltage magnitudes $|U|$ remain within limits for most nodes. There are some violations caused again by the time-varying conditions, and the estimation error, see Theorems~\ref{thm:opconvloc} and \ref{thm:opconvglb}.

\section{Conclusions and Outlook}\label{sec:conc}

In this paper, we proposed an algorithm for real-time grid operation that combines Online Feedback Optimization (OFO) and dynamic estimation with an online power flow solver. First, we have shown that our online power flow solver is stable and converges despite having control inputs and disturbances that change simultaneously. Then, we have proved that a quadratic-programming-based OFO converges to a neighborhood of the AC-OPF solutions, {with a quantifiable small tracking error depending on the size of exogenous disturbances, and the power flow approximation}. Finally, we have observed in a simulated test case how our approach succeeds in tracking the solution of the AC-OPF, and adapts very quickly to time-varying conditions. An interesting future direction could be to account for estimation uncertainty to take more robust decisions and lower the occurrence of voltage violations.

\bibliographystyle{IEEEtran}
\bibliography{IEEEabrv,ifacconf}

% Generated by IEEEtran.bst, version: 1.12 (2007/01/11)
\begin{thebibliography}{10}
\providecommand{\url}[1]{#1}
\csname url@samestyle\endcsname
\providecommand{\newblock}{\relax}
\providecommand{\bibinfo}[2]{#2}
\providecommand{\BIBentrySTDinterwordspacing}{\spaceskip=0pt\relax}
\providecommand{\BIBentryALTinterwordstretchfactor}{4}
\providecommand{\BIBentryALTinterwordspacing}{\spaceskip=\fontdimen2\font plus
\BIBentryALTinterwordstretchfactor\fontdimen3\font minus
  \fontdimen4\font\relax}
\providecommand{\BIBforeignlanguage}[2]{{%
\expandafter\ifx\csname l@#1\endcsname\relax
\typeout{** WARNING: IEEEtran.bst: No hyphenation pattern has been}%
\typeout{** loaded for the language `#1'. Using the pattern for}%
\typeout{** the default language instead.}%
\else
\language=\csname l@#1\endcsname
\fi
#2}}
\providecommand{\BIBdecl}{\relax}
\BIBdecl

\bibitem{molzahn2019surveyrel}
D.~K. Molzahn and I.~A. Hiskens, ``{A Survey of Relaxations and Approximations
  of the Power Flow Equations},'' \emph{Foundations and Trends in Electric
  Energy Systems}, vol.~4, no. 1-2, pp. 1--221, February 2019.

\bibitem{molzahn2017survey}
D.~K. {Molzahn}, F.~{D{\"o}rfler}, H.~{Sandberg}, S.~H. {Low},
  S.~{Chakrabarti}, R.~{Baldick}, and J.~{Lavaei}, ``A survey of distributed
  optimization and control algorithms for electric power systems,''
  \emph{{IEEE} Trans. Smart Grid}, vol.~8, no.~6, pp. 2941--2962, Nov. 2017.

\bibitem{anese2016optimal}
E.~Dall'Anese and A.~Simonetto, ``Optimal power flow pursuit,'' \emph{{IEEE}
  Trans. Smart Grid}, vol.~9, no.~2, pp. 942--952, Mar. 2018.

\bibitem{hauswirth2017onlinePF}
A.~Hauswirth, A.~Zanardi, S.~Bolognani, F.~D{\"o}rfler, and G.~Hug, ``Online
  optimization in closed loop on the power flow manifold,'' in \emph{2017 IEEE
  Manchester PowerTech}, Jun. 2017, pp. 1--6.

\bibitem{picallo2020seopfrt}
M.~Picallo, S.~Bolognani, and F.~Dörfler, ``Closing the loop: Dynamic state
  estimation and feedback optimization of power grids,'' \emph{Electric Power
  Systems Research}, vol. 189, p. 106753, 2020.

\bibitem{tang2017real}
Y.~Tang, K.~Dvijotham, and S.~Low, ``Real-time optimal power flow,''
  \emph{{IEEE} Trans. Smart Grid}, vol.~8, no.~6, pp. 2963--2973, 2017.

\bibitem{ortmann2020experimental}
L.~Ortmann, A.~Hauswirth, I.~Caduff, F.~D{\"o}rfler, and S.~Bolognani,
  ``Experimental validation of feedback optimization in power distribution
  grids,'' \emph{Electric Power Systems Research}, vol. 189, p. 106782, 2020.

\bibitem{ortmann2020fully}
L.~Ortmann, A.~Prostejovsky, K.~Heussen, and S.~Bolognani, ``Fully distributed
  peer-to-peer optimal voltage control with minimal model requirements,''
  \emph{Electric Power Systems Research}, vol. 189, p. 106717, 2020.

\bibitem{doerfler2019autonomousgrids}
F.~{D{\"o}rfler}, S.~{Bolognani}, J.~W. {Simpson-Porco}, and S.~{Grammatico},
  ``Distributed control and optimization for autonomous power grids,'' in
  \emph{18th European Control Conference (ECC)}, Jun. 2019, pp. 2436--2453.

\bibitem{colombino2019robustness}
M.~Colombino, J.~W. Simpson-Porco, and A.~Bernstein, ``Towards robustness
  guarantees for feedback-based optimization,'' in \emph{58th IEEE Conf.
  Decision and Control}, Dec. 2019.

\bibitem{qu2019optimal}
G.~Qu and N.~Li, ``Optimal distributed feedback voltage control under limited
  reactive power,'' \emph{{IEEE} Trans. Power Syst.}, vol.~35, no.~1, pp.
  315--331, 2019.

\bibitem{abur2004power}
A.~Abur and A.~G. Exposito, \emph{Power System State Estimation: Theory and
  Implementation}.\hskip 1em plus 0.5em minus 0.4em\relax CRC Press, 2004.

\bibitem{guo2020solving}
Y.~Guo, X.~Zhou, C.~Zhao, Y.~Chen, T.~Summers, and L.~Chen, ``Solving optimal
  power flow for distribution networks with state estimation feedback,'' in
  \emph{2020 American Control Conference (ACC)}.\hskip 1em plus 0.5em minus
  0.4em\relax IEEE, 2020, pp. 3148--3155.

\bibitem{zhao2019revsdynse}
J.~{Zhao}, A.~{Gómez-Expósito}, M.~{Netto}, L.~{Mili}, A.~{Abur},
  V.~{Terzija}, I.~{Kamwa}, B.~{Pal}, A.~K. {Singh}, J.~{Qi}, Z.~{Huang}, and
  A.~P.~S. {Meliopoulos}, ``Power system dynamic state estimation: Motivations,
  definitions, methodologies, and future work,'' \emph{{IEEE} Trans. Power
  Syst.}, vol.~34, no.~4, pp. 3188--3198, Jul. 2019.

\bibitem{torrisi2018projected}
G.~Torrisi, S.~Grammatico, R.~S. Smith, and M.~Morari, ``A projected gradient
  and constraint linearization method for nonlinear model predictive control,''
  \emph{SIAM Journal on Control and Optimization}, vol.~56, no.~3, pp.
  1968--1999, 2018.

\bibitem{haberle2020non}
V.~H{\"a}berle, A.~Hauswirth, L.~Ortmann, S.~Bolognani, and F.~D{\"o}rfler,
  ``Non-convex feedback optimization with input and output constraints,''
  \emph{IEEE Control Systems Letters}, vol.~5, no.~1, pp. 343--348, 2020.

\bibitem{picallo2021predictivesensitivity}
M.~Picallo, S.~Bolognani, and F.~D{\"o}rfler, ``Predictive-sensitivity: Beyond
  singular perturbation for control design on multiple time scales,''
  \emph{arXiv preprint arXiv:2101.04367}, 2021.

\bibitem{kersting1991radial}
W.~Kersting, ``Radial distribution test feeders,'' \emph{{IEEE} Trans. Power
  Syst.}, vol.~6, no.~3, pp. 975--985, 1991.

\bibitem{dvijotham2017high}
K.~Dvijotham, E.~Mallada, and J.~W. Simpson-Porco, ``High-voltage solution in
  radial power networks: Existence, properties, and equivalent algorithms,''
  \emph{IEEE Control Systems Letters}, vol.~1, no.~2, pp. 322--327, 2017.

\bibitem{bolognani2016existence}
S.~Bolognani and S.~Zampieri, ``On the existence and linear approximation of
  the power flow solution in power distribution networks,'' \emph{{IEEE} Trans.
  Power Syst.}, vol.~31, no.~1, pp. 163--172, 2016.

\bibitem{bernstein2018loadflowexist}
A.~{Bernstein}, C.~{Wang}, E.~{Dall’Anese}, J.~{Le Boudec}, and C.~{Zhao},
  ``Load flow in multiphase distribution networks: Existence, uniqueness,
  non-singularity and linear models,'' \emph{{IEEE} Trans. Power Syst.},
  vol.~33, no.~6, pp. 5832--5843, Nov. 2018.

\bibitem{aolaritei2018voltagestab}
L.~Aolaritei, S.~Bolognani, and F.~Dörfler, ``Hierarchical and distributed
  monitoring of voltage stability in distribution networks,'' \emph{{IEEE}
  Trans. Power Syst.}, vol.~33, no.~6, pp. 6705--6714, 2018.

\bibitem{krantz2012implicit}
S.~G. Krantz and H.~R. Parks, \emph{The implicit function theorem: history,
  theory, and applications}.\hskip 1em plus 0.5em minus 0.4em\relax Springer
  Science \& Business Media, 2012.

\bibitem{picallo2017twostepSE}
M.~Picallo, A.~Anta, A.~Panosyan, and B.~{De Schutter}, ``A two-step
  distribution system state estimator with grid constraints and mixed
  measurements,'' in \emph{IEEE Power Systems Computation Conference}, Jun.
  2018.

\bibitem{baldwin1993power}
T.~Baldwin, L.~Mili, M.~Boisen, and R.~Adapa, ``Power system observability with
  minimal phasor measurement placement,'' \emph{{IEEE} Trans. Power Syst.},
  vol.~8, no.~2, pp. 707--715, 1993.

\bibitem{jazwinski1970stochfilt}
A.~H. Jazwinski, ``Mathematics in science and engineering,'' \emph{Stochastic
  processes and filtering theory}, vol.~64, 1970.

\bibitem{bolognani2015reactivePF}
S.~{Bolognani}, R.~{Carli}, G.~{Cavraro}, and S.~{Zampieri}, ``Distributed
  reactive power feedback control for voltage regulation and loss
  minimization,'' \emph{{IEEE} Trans. Autom. Control}, vol.~60, no.~4, pp.
  966--981, Apr. 2015.

\bibitem{liao2020time}
D.~Liao-McPherson, M.~M. Nicotra, and I.~Kolmanovsky, ``Time-distributed
  optimization for real-time model predictive control: Stability, robustness,
  and constraint satisfaction,'' \emph{Automatica}, vol. 117, p. 108973, 2020.

\bibitem{nocedal2006numerical}
J.~Nocedal and S.~Wright, \emph{Numerical optimization}.\hskip 1em plus 0.5em
  minus 0.4em\relax Springer Science \& Business Media, 2006.

\bibitem{domahidi2013ecos}
A.~Domahidi, E.~Chu, and S.~Boyd, ``Ecos: An socp solver for embedded
  systems,'' in \emph{2013 European Control Conference (ECC)}.\hskip 1em plus
  0.5em minus 0.4em\relax IEEE, 2013, pp. 3071--3076.

\bibitem{reif1998ekf}
K.~Reif, F.~Sonnemann, and R.~Unbehauen, ``An {EKF}-based nonlinear observer
  with a prescribed degree of stability,'' \emph{Automatica}, vol.~34, no.~9,
  pp. 1119--1123, 1998.

\bibitem{khalil2002nonlinear}
H.~K. Khalil and J.~W. Grizzle, \emph{Nonlinear systems}.\hskip 1em plus 0.5em
  minus 0.4em\relax Prentice hall Upper Saddle River, NJ, 2002, vol.~3.

\bibitem{bertsekas1997nonlinear}
D.~P. Bertsekas, ``Nonlinear programming,'' \emph{Journal of the Operational
  Research Society}, vol.~48, no.~3, pp. 334--334, 1997.

\bibitem{ECOdata}
C.~Beckel, W.~Kleiminger, R.~Cicchetti, T.~Staake, and S.~Santini, ``The {ECO}
  data set and the performance of non-intrusive load monitoring algorithms,''
  in \emph{Proc. 1st ACM Conf. on Embedded Systems for Energy-Efficient
  Buildings}, 11 2014.

\bibitem{solarprofile}
HelioClim-3, ``{HelioClim-3 Database of Solar Irradiance},''
  \url{http://www.soda-pro.com/web-services/radiation/helioclim-3-archives-for-free},
  [Online]. Accessed: 2017-12-01.

\bibitem{windprofile}
MERRA-2, ``{The Modern-Era Retrospective analysis for Research and
  Applications, Version 2 (MERRA-2) Web service},''
  \url{http://www.soda-pro.com/web-services/meteo-data/merra}, [Online].
  Accessed: 2017-12-01.

\bibitem{atkinson2008introduction}
K.~E. Atkinson, \emph{An introduction to numerical analysis}.\hskip 1em plus
  0.5em minus 0.4em\relax John wiley \& sons, 2008.

\bibitem{jittorntrum1984solution}
K.~Jittorntrum, ``Solution point differentiability without strict
  complementarity in nonlinear programming,'' in \emph{Sensitivity, Stability
  and Parametric Analysis}.\hskip 1em plus 0.5em minus 0.4em\relax Springer,
  1984, pp. 127--138.

\bibitem{bazaraa2013nonlinear}
M.~S. Bazaraa, H.~D. Sherali, and C.~M. Shetty, \emph{Nonlinear programming:
  theory and algorithms}.\hskip 1em plus 0.5em minus 0.4em\relax John Wiley \&
  Sons, 2013.

\bibitem{jiang2004nonlinear}
Z.-P. Jiang, Y.~Lin, and Y.~Wang, ``Nonlinear small-gain theorems for
  discrete-time feedback systems and applications,'' \emph{Automatica},
  vol.~40, no.~12, pp. 2129--2136, 2004.

\bibitem{la1976stability}
J.~P. La~Salle, \emph{The stability of dynamical systems}.\hskip 1em plus 0.5em
  minus 0.4em\relax SIAM, 1976.

\bibitem{gao2000equivalent}
K.~Gao and Y.~Lin, ``On equivalent notions of input-to-state stability for
  nonlinear discrete time systems,'' in \emph{Proc. of the IASTED Int. Conf. on
  Control and Applications}, 2000, pp. 81--87.

\bibitem{tran2015equivalences}
D.~N. Tran, C.~M. Kellett, and P.~M. Dower, ``Equivalences of stability
  properties for discrete-time nonlinear systems,'' \emph{IFAC-PapersOnLine},
  vol.~48, no.~11, pp. 772--777, 2015.

\bibitem{jiang2001input}
Z.-P. Jiang and Y.~Wang, ``Input-to-state stability for discrete-time nonlinear
  systems,'' \emph{Automatica}, vol.~37, no.~6, pp. 857--869, 2001.

\end{thebibliography}

%\iffalse

\appendices

\section{Sensitivity-conditioning for Power Flow}\label{app:rootfindpredsens}

{The operator $\pfop{}_N(\cdot)$ in Algorithm~\eqref{alg:onlinepf} corresponds to a $\frac{1}{N}$-step-size Euler-forward integration \cite{atkinson2008introduction} of a continuous-time version of Newton's method \eqref{eq:rootfind} with sensitivity-conditioning terms \cite{picallo2021predictivesensitivity} to compensate the temporal variation of $u_t$ and $\hat{d}_t$.} 

To illustrate this, first we represent the continuous-time version of Newton's method in \eqref{eq:rootfind} as
\begin{equation}\label{eq:newtoncont}
    \frac{dx(\tau)}{d\tau} = -\nabla_x h(x(\tau),u,d)^{-1} h(x(\tau),u,d),
\end{equation}
where $u$ and $d$ remain constant in the faster time scale $\tau$, as in the discrete-time iterations in \eqref{eq:rootfind}. The \textit{high-voltage} solution $\chi(u,d)$ is an equilibrium of \eqref{eq:newtoncont}, since $h(\chi(u,d),u,d)=0$. Furthermore, since $\nabla_x h(\chi(u,d),u,d)$ is invertible under Assumption~\ref{ass:existsol}, and under \eqref{eq:newtoncont} the norm of the power flow equations is exponentially decreasing in a neighborhood of $\chi(u,d)$, i.e., $\frac{d\normsz{h(x(\tau),u,d)}_2^2}{d\tau} = -2\normsz{h(x(\tau),u,d)}_2^2$, then $\chi(u,d)$ is a locally exponentially stable equilibrium of \eqref{eq:newtoncont}. After \eqref{eq:newtoncont} converges to $\chi(u,d)$, the updates of $u,d$ can take place on a slower time scale via \eqref{eq:KFsimple} and \eqref{eq:oper}. 

Consider $\delta \in [0,1]$ and a continuous transition from the initial input $u_{t}$ to the final one $u_{t+1}$: $u(\delta)= {u_{t}} + \delta({u_{t+1}}-{u_{t}})$, and similarly $\hat{d}(\delta)$. Then the slower continuous-time dynamics of $u,d$ representing their temporal variation are
\begin{equation}\label{eq:slowdyn}\begin{array}{rl}
    \frac{du(\delta)}{d\delta} = ({u_{t+1}}-{u_{t}}), \; \frac{d\hat{d}(\delta)}{d\delta} = ({\hat{d}_{t+1}}-{\hat{d}_{t}}).
\end{array}
\end{equation}

The sensitivity-conditioning approach \cite{picallo2021predictivesensitivity} allows to merge the fast dynamics of $x(\tau)$ in \eqref{eq:newtoncont} with the slow ones of $u(\delta),d(\delta)$ in \eqref{eq:slowdyn}, while preserving the stability of \eqref{eq:newtoncont}. For this, feedforward terms based on $\frac{du(\delta)}{d\delta},\frac{d\hat{d}(\delta)}{d\delta}$ are added to \eqref{eq:newtoncont}:

\begin{equation}\label{eq:newtoncontPS}
    \frac{dx{(\delta)}}{d\delta} = -\nabla_x h{(\delta)}^{-1} \big( h{(\delta)} + \nabla_u h{(\delta)} \tfrac{du{(\delta)}}{d\delta} + \nabla_d h{(\delta)} \tfrac{d\hat{d}{(\delta)}}{d\delta} \big),
\end{equation}
where in an abuse of notation we use $h(\delta)=h(x(\delta),u(\delta),\hat{d}(\delta))$, and similarly $\nabla h{(\delta)}$. Under \eqref{eq:newtoncontPS} and \eqref{eq:slowdyn} we have $\frac{d\normsz{h{(\delta)}}_2^2}{d\delta}=-2\normsz{h{(\delta)}}_2^2$. Hence, $\normsz{h{(\delta)}}_2 = \normsz{h{(0)}}_2 e^{-\delta}$, and $x(\delta)$ converges exponentially to $\chi(u(\delta),\hat{d}(\delta))$ despite the time-varying $u(\delta),\hat{d}(\delta)$. 

\section{Proof of Prop.~\ref{prop:hconv}}\label{app:proofhconv}

Consider the first-order approximation of $h_{t,k+1}=h(z_{t,k+1},u_{t,k+1},d_{t,k+1})$ around $(z_{t,k},u_{t,k},\hat{d}_{t,k})$:
\begin{equation*}\begin{array}{rl}
h_{t,k+1} = & h_{t,k}
+ \nabla h_{t,k} \left[ \hspace{-0.1cm} \begin{smallmatrix} 
     {z_{t,k+1}-z_{t,k}} \\ 
     {u_{t,k+1}-u_{t,k}} \\
     {\hat{d}_{t,k+1}-\hat{d}_{t,k}} \end{smallmatrix} \hspace{-0.1cm} \right] + \mathcal{O} \hspace{-0.1cm} \left( \hspace{-0.1cm} \left[ \hspace{-0.1cm} \begin{smallmatrix} 
     \normsz{z_{t,k+1}-z_{t,k}}_2^2 \\ 
     \normsz{u_{t,k+1}-u_{t,k}}_2^2 \\
     \normsz{\hat{d}_{t,k+1}-\hat{d}_{t,k}}_2^2 \end{smallmatrix} \hspace{-0.1cm} \right] \hspace{-0.1cm} \right) \\
    %  \stackrel{\eqref{eq:rootfindPSint}}{=} 
     = & (1-\frac{1}{N})h_{t,k} + \mathcal{O}(\frac{1}{N^2}),
\end{array}
\end{equation*}
where the last equality holds by Algorithm~\ref{alg:onlinepf}. As $\nabla_x h (\chi(u,d),u,d)$ is invertible under Assumption~\ref{ass:existsol}, there exists $R_h>0$ such that $\nabla_x h(x,u,d)$ is invertible if $\normsz{h(x,u,d)}_2 \leq R_h$. Since $h(\cdot)$ is smooth, its second-order derivatives are bounded in the compact sets $\mathcal{U},\mathcal{D}$. Hence, if $\normsz{h_{t,k}}_2 \leq R_h$, the second-order error is bounded by some $L_h>0$, i.e., $\mathcal{O}(\frac{1}{N^2}) \leq \frac{L_h}{N^2}$, and thus applying \eqref{eq:1stapprox} we get
\begin{equation}\label{eq:1stapprox}
    \normsz{h_{t,k+1}}_2 \leq (1-\tfrac{1}{N})\normsz{h_{t,k}}_2 + \tfrac{L_h}{N^2}.
\end{equation}

First, note that if $\normsz{h_{t_0,k_0}}_2 \leq \frac{L_h}{N} \leq R_h$ for some $t_0,k_0$, then it remains below this bound, since $$
\normsz{h_{t_0,k_0+1}}_2 \leq (1-\tfrac{1}{N})\normsz{h_{t_0,k_0}}_2 + \tfrac{L_h}{N^2} \leq (1-\tfrac{1}{N})\tfrac{L_h}{N} + \tfrac{L_h}{N^2} = \tfrac{L_h}{N}.
$$
On the other hand, while $\frac{L_h}{N} \leq \normsz{h_{t}}_2 \leq R_h$ we have
$$ \normsz{h_{t+1}}_2 - \normsz{h_{t}}_2  \leq -\tfrac{1}{N} \sum_{j=0}^N \overbrace{\normsz{h_{t,j}}_2-\tfrac{L_h}{N}}^{>0}<0, $$
% $$ \normsz{h_{t}}_2 - \normsz{h_{t-1}}_2  \leq -\tfrac{1}{N} \sum_{i=0}^{t-1}\sum_{j=0}^N \overbrace{\normsz{h_{i,j}}_2-\tfrac{L_h}{N}}^{>0}. $$
so $\normsz{h_{t}}_2$ is strictly decreasing on $t$, and thus $\lim_{t \to \infty} \normsz{h_{t}}_2 \leq \tfrac{L_h}{N}$, because $\normsz{h_{t}}_2$ is lower bounded by $0$. As a result, if $\normsz{h(z_{0},u_{0},\hat{d}_{0})}_2 \leq R_h$ and $\frac{L_h}{N}\leq R_h $, then $\nabla_x h(z_{t},u_{t},\hat{d}_{t})$ is invertible for all $t$.

\section{Proof of Thm.~\ref{thm:opconvloc} and \ref{thm:opconvglb}}\label{app:proofopconv}

\subsection{Thm.~\ref{thm:opconvloc}: Local convergence and stability} Consider the following continuous $l_1$ merit function \cite[Ch.~15]{nocedal2006numerical} defined on $\mathcal{U} \times \Theta$, with parameter $\zeta>0$:
\begin{equation*}\begin{array}{l}
V(u,\theta) = \\
f(u) + \zeta \big( \sum_i [g_i(u,d)]_0^\infty  +\sum_i [u_i-\bar{u}_i]_0^\infty + [\munderbar{u}_i-u_i]_0^\infty \big)
\end{array}
\end{equation*} 
Since $f(\cdot)$ and $g(\cdot)$ are twice continuously differentiable over the compact sets $\mathcal{U}$, $\mathcal{D}$, the derivatives $\nabla_u f(\cdot), \allowbreak \nabla_u g(\cdot)$ are Lipschitz continuous with parameters $L_f,L_g$.
Define $\bar{\zeta}$ as the supremum over $u\in \mathcal{U}, \theta \in \Theta$ of all optimum dual variables of \eqref{eq:qpaproxhes}.  
% (I.e., $\bar{\zeta}= \sup_{\theta \in \Theta, u \in \mathcal{U}} \max\big( \max_i \lambda_i^*(u,\chi(u,d),\theta), \allowbreak \max_i \munderbar{\lambda}_i^*(u,\chi(u,d),\theta), \allowbreak \max_i \bar{\lambda}_i^*(u,\chi(u,d),\theta) \big)$, where $\lambda_i^*(\cdot),\munderbar{\lambda}_i^*(\cdot),\bar{\lambda}_i^*(\cdot) \geq 0$ denote the optimal dual variable associated with $\bar{g}_i(x) + \nabla_x \bar{g}_i(x)^T \nabla_u \chi(u,d)(u^+ \hspace{-0.1cm} - \hspace{-0.05cm} u) \leq 0, \munderbar{u}_i \leq {u}_i, u_i \leq \bar{u}_i$ respectively.) 
Under Assumptions \ref{ass:existsol},\ref{ass:regsolplus}, the operator $T(\cdot)$ in \eqref{eq:qpaproxhes} and these optimal dual variables are continuous in $u \in \mathcal{U},\theta \in {\Theta}$ \cite{jittorntrum1984solution,haberle2020non}. Since $\mathcal{U},\Theta$ are compact, $\bar{\zeta}$ is finite. Then, for $\zeta>\bar{\zeta}$, $u \in \mathcal{U}$, $B(u) \succeq \eta I_d$, and $\nu=\eta-L_{f}+L_{g}$, we have:

\begin{equation}\label{eq:descdir}\begin{array}{l}
         V(\overbrace{T(u,\chi(u,d),\theta)}^{u^+},\theta)-V(u,\theta)  \\
         \leq -(\lambda_{\min}(B(u))-L_{f}-L_{g})\normsz{T(u,\chi(u,d),\theta)-u}_2^2 \\
         \leq -\nu \normsz{T(u,\chi(u,d),\theta)-u}_2^2, % \\
        %  \leq -\nu \alpha_1(\normsz{u-u^*(\theta)}_2)
    \end{array}
\end{equation}
see, e.g., \cite[Lemma~18.2]{nocedal2006numerical}, \cite[Lemma~10.4.1]{bazaraa2013nonlinear}, \cite{torrisi2018projected,haberle2020non}.

The stationary points of \eqref{eq:optprob} are the only fixed points of \eqref{eq:qpaproxhes}, i.e., $u=T(u,\chi(u,d),\theta)$ for $u \in \mathcal{U}^s(\theta)$, and $\normsz{T(u,\chi(u,d),\theta)-u}_2^2 \allowbreak >0$ for $u \notin \mathcal{U}^s(\theta)$. Hence, since $T(u,\chi(u,d),\theta)$ is continuous for $u \in \mathcal{U}$ and $\theta \in \Theta$, and positive definite with respect to $\mathcal{U}^s(\theta)$, there exists $\mathcal{K}$-functions $\alpha_{1,\theta}(\cdot)$ and $\alpha_{2,\theta}(\cdot)$ such that $\alpha_{1,\theta}(\distset{u}_{\mathcal{U}^s(\theta)}) \leq \normsz{T(u,\chi(u,d),\theta)-u}_2^2 \leq \alpha_{2,\theta}(\distset{u}_{\mathcal{U}^s(\theta)})$. These $\mathcal{K}$-functions are defined on $[0,r(\theta)]$, where $r(\theta) = \max_{u \in \mathcal{U}} \distset{u}_{\mathcal{U}^s(\theta)}>0$ (see the proof of \cite[Lemma~4.3]{khalil2002nonlinear} for a way to construct them). 

Consider a strict local minimum $u^*(\theta) \in \mathcal{U}^*(\theta)$. For $u$ in a neighborhood of $u^*(\theta)$ we have $\alpha_{1,\theta}(\normsz{u-u^*(\theta)}_2) \leq \normsz{T(u,\chi(u,d),\theta)-u}_2^2 \leq \alpha_{2,\theta}(\normsz{u-u^*(\theta)}_2)$. Define $\tilde{V}(u,\theta)=V(u,\theta)-V(u^*(\theta),\theta)$. Since $u^*(\theta)$ is a strict local minimum, $\tilde{V}(u,\theta)$ is positive definite in a neighborhood of $u^*(\theta)$, and by \eqref{eq:descdir} we have $\tilde{V}(u^+,\theta)-\tilde{V}(u,\theta) \leq -\nu\alpha_{1,\theta}(\normsz{u-u^*(\theta)}_2)$. Hence, $u^*(\theta)$ is an asymptotically stable equilibrium of \eqref{eq:oper} \cite{jiang2004nonlinear}. The other direction of the implication, i.e., that an asymptotically stable equilibrium is a strict local minimum, follows from \cite[Thm.~3]{haberle2020non}.

Consider a stationary point $u^s(\theta)$ that is not a strict local minimum, i.e., $u^s(\theta) \in \mathcal{U}^s(\theta)\setminus \mathcal{U}^*(\theta)$. Since stationary points are isolated by assumption, $u^s(\theta)$ cannot be a local minimum, hence there exists a feasible $u \in \mathcal{U}^f(\theta)$ in the neighborhood of $u^s(\theta)$ such that $f(u) < f(u^s(\theta))$, and thus $V(u,\theta) < V(u^s(\theta),\theta)$. Since $V(u,\theta)$ is strictly decreasing by \eqref{eq:descdir}, then the trajectory moves away from $u^s(\theta)$.

\ref{thmit:2})
First we show that $T(\cdot)$ in \eqref{eq:qpaproxhes} is continuous and locally Lipschitz on $x$ when applying Algorithm~\ref{alg:optest}: Under Assumptions~\ref{ass:existsol} and \ref{ass:regsolplus}, there exists $R_h>0$ such that $\nabla_x h(x,u,d)$ is invertible if $\normsz{h(x,u,d)}_2 < R_h$, and $R_\mathcal{U}>0$ such that LICQ holds while $\normsz{h(x,u,d)}_2 < R_\mathcal{U}$. Define the minimum $R=\min(R_h,R_\mathcal{U})$. By Prop.~\ref{prop:hconv}, if $\tfrac{L_h}{N}\leq R$ and $\normsz{h(z_0,u_0,\hat{d}_0)}_2 < R$, then $\normsz{h(z_{t}^{\text{pr}},u_t ,\hat{d}_{t-1}+\mu_{t-1})}_2 < R$ and $\normsz{h(z_t,u_t ,\hat{d}_t)}_2 < R$ for all $t$, and invertibility of $\nabla_x h (\cdot)$ and LICQ of $\mathcal{U}^f(\hat{\theta}_t)$ are preserved for all $t$. Then, $T(\cdot)$ is continuous and locally Lipschitz on $x$ \cite{jittorntrum1984solution}, i.e., there exists $R_T,L_T>0$ so that $\normsz{T(u_t ,z_t,\hat{{\theta}}_{t})-T(u_t ,\hat{x}_{t},\hat{\theta}_{t})}_2 \leq L_{T} \normsz{\hat{x}_{t} - z_t}_2$ when $\normsz{h(z_t,u_t ,\hat{d}_t)}_2 < R_T$. Hence, we need $\normsz{h(z_0,u_0,\hat{d}_0)}_2\leq \min(R_T,R)$ and $\tfrac{L_h}{N} \leq \min(R_T,R)$.

For a constant $\theta$ and using the exact power flow solution $x=\chi(u,d)$, \eqref{eq:oper} converges linearly in a neighborhood of the strict local minimum $u^*(\theta)$ \cite[Thm.~4.1]{torrisi2018projected}, i.e., there exists $\xi \in(0,1)$ such that $\normsz{T(u,\chi(u,d),\theta)-u^*({\theta})}_2 \leq \xi\normsz{u-u^*({\theta})}_2$. Finally, since $u^*(\hat{\theta})$ is Lipschitz continuous there exists $L_u>0$ such that %Algorithm~\ref{alg:optest} with a time-varying $\hat{\theta}_t$, and an approximate power flow solution $z_t$, satisfies
\begin{equation*}\begin{array}{rl}
     & \normsz{u_{t+1}-u^*(\hat{\theta}_{t+1})}_2 \\
     \leq & \normsz{u_{t}^+-u^*(\hat{\theta}_t)}_2 + \normsz{\underbrace{u_{t+1}}_{[u_{t}^+]^{\bar{u}_{t}}_{\munderbar{u}_{t}}}-u_t^+}_2 + \normsz{u^*(\hat{\theta}_{t+1})-u^*(\hat{\theta}_t)}_2 \\[-0.6cm]
    \leq & \normsz{\overbrace{T(u_t,z_t,\hat{\theta}_t)}^{u_{t}^+}-u^*(\hat{\theta}_t)}_2 + (1+L_u)\normsz{\hat{\theta}_{t+1}-\hat{\theta}_{t}}_2 \\
    \leq & \normsz{T(u_t,\hat{x}_t,\hat{\theta}_t)-u^*(\hat{\theta}_t)}_2 + (1+L_u)\normsz{\hat{\theta}_{t+1}-\hat{\theta}_{t}}_2 \\
    & + \normsz{T(u_t,z_t,\hat{\theta}_t)-T(u_t,\hat{x}_t,\hat{\theta}_t)}_2 \\
    \leq & \xi \normsz{u_{t}-u^*(\hat{\theta}_t)}_2 + (1+L_u)\normsz{\hat{\theta}_{t+1}-\hat{\theta}_{t}}_2 + L_T\normsz{\hat{x}_t-z_t}_2 \\
    \leq & \xi^{t+1} \normsz{u_{0}-u^*(\hat{\theta}_0)}_2 \\
    & + \frac{\max(L_T,1+L_u)}{1-\xi} \sup_{k \leq t} \big( \normsz{\hat{\theta}_{k+1}-\hat{\theta}_{k}}_2 + \normsz{\hat{x}_k-z_k}_2 \big)
\end{array}
\end{equation*}
Consider a ball $\mathcal{B}_{R_u}(u^*(\theta))=\{u | \normsz{u-u^*(\theta)}_2 \leq R_u\}$ in the neighborhood of linear convergence of \eqref{eq:oper}. To ensure that $u_{t+1}$ remains in this neighborhood, it must hold that $\xi \normsz{u_{t}-u^*(\hat{\theta}_t)}_2 + L_u\normsz{\hat{\theta}_{t+1}-\hat{\theta}_{t}}_2 + L_T\normsz{\hat{x}_t-z_t}_2 \leq 
\xi R_u + \allowbreak (1+L_u)\normsz{\hat{\theta}_{t+1}-\hat{\theta}_{t}}_2 + L_T\normsz{\hat{x}_t-z_t}_2 \leq R_u$, or equivalently $$\underbrace{\tfrac{(1+L_u)}{(1-\xi)R_u}}_{\frac{1}{R_\theta}}\normsz{\hat{\theta}_{t+1}-\hat{\theta}_{t}}_2 + \underbrace{\tfrac{L_x}{(1-\xi)R_u}}_{\frac{1}{R_x}}\normsz{\hat{x}_t-z_t}_2 \leq 1.$$

\subsection{Thm.~\ref{thm:opconvglb}: Global convergence and stability}
For a constant $\hat{\theta}$ and using the exact power flow solution $\chi(u,d)$ we have: Since $\normsz{T(u,\chi(u,d),\theta)-u}_2$ is continuous positive definite with respect to $\mathcal{U}^s(\hat{\theta})=\mathcal{U}^*(\hat{\theta})$, \eqref{eq:descdir} implies that $u$ converges to $\mathcal{U}^*(\hat{\theta})$ by the invariance principle for discrete-time systems \cite{la1976stability}. Hence, since the elements in $\mathcal{U}^*(\hat{\theta})$ are asymptotically stable, $\mathcal{U}^*(\hat{\theta})$ is globally asymptotically stable. Then, if we show that the update of $u_t$ in Algorithm~\ref{alg:optest} satisfies the limit property \cite{gao2000equivalent,tran2015equivalences}, we can conclude that $\mathcal{U}^*(\hat{\theta})$ is input-to-state-stable \cite{jiang2001input} by \cite[Thm.~7]{gao2000equivalent}, i.e., there exists $\mathcal{KL},\mathcal{K}$-functions, $\beta(\cdot),\gamma(\cdot)$ respectively, such that
\begin{equation*}\begin{array}{l}
      \distset{u_t}_{\mathcal{U}^*(\hat{\theta}_t)} \\
      \leq \beta(\distset{u_0}_{\mathcal{U}^s(\hat{\theta}_0)},t) + \gamma(\sup_{k < t} \normsz{\hat{\theta}_{k+1}-\hat{\theta}_{k}}_2+\normsz{\hat{x}_k-z_k}_2).
\end{array}
\end{equation*}

To prove the limit property, we extend the $\mathcal{K}$-functions $\alpha_{1,\theta}(\cdot)$ and $\alpha_{2,\theta}(\cdot)$ linearly to get the $\mathcal{K_\infty}$-functions
\begin{equation*}
    \tilde{\alpha}_{i,\theta}(s) = \left\lbrace \begin{array}{ll}
        \alpha_{i,\theta}(s) & \text{if } s \leq r(\theta) \\
        \frac{\alpha_{i,\theta}(r(\theta))}{r(\theta)}s & \text{if } s > r(\theta),
    \end{array}\right.
\end{equation*}

Define $\alpha_1 (s) = \min_{\theta \in \Theta} \tilde{\alpha}_{1,\theta}(s)$, and $\alpha_2 (s) = \max_{\theta \in \Theta} \tilde{\alpha}_{2,\theta}(s)$.
Then, for $u \in \mathcal{U},\theta \in \Theta$ we have 
\begin{equation}\label{eq:Kfunc}\begin{array}{c}
    \alpha_{1}(\distset{u}_{\mathcal{U}^s(\theta)}) \leq  \normsz{T(u,\chi(u,d),\theta)-u}_2^2 \leq  \alpha_{2}(\distset{u}_{\mathcal{U}^s(\theta)}),  
\end{array}
\end{equation}
and $\alpha_i(\cdot)$ are $\mathcal{K}_\infty$-functions by the following lemma: 

\begin{lemma}\label{lemma:Kcont}
If ${\alpha}_{\theta}(\cdot)$ are $\mathcal{K}_\infty$-functions for all $\theta \in \Theta$, and $\Theta$ is a compact set, then $\munderbar{\alpha}(s)=\min_{\theta \in \Theta} {\alpha}_{\theta}(s)$ and $\bar{\alpha}(s)=\max_{\theta \in \Theta} {\alpha}_{\theta}(s)$ are $\mathcal{K}_\infty$-functions.
\end{lemma}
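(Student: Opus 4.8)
I would verify directly that $\munderbar{\alpha}$ and $\bar{\alpha}$ each satisfy the four defining properties of a $\mathcal{K}_\infty$-function: vanishing at $0$, strict monotonicity, continuity, and radial unboundedness. Throughout I rely on the fact that the map $(\theta,s)\mapsto \alpha_\theta(s)$ is jointly continuous on $\Theta\times[0,\infty)$; this is inherited from the construction of the $\alpha_{i,\theta}(\cdot)$ in the proof of Thm.~\ref{thm:opconvglb} (via the Khalil construction applied to the jointly continuous map $\|T(u,\chi(u,d),\theta)-u\|_2^2$), and together with compactness of $\Theta$ it guarantees that the extrema over $\theta$ are \emph{attained}, so that $\min$ and $\max$ are legitimate rather than mere $\inf/\sup$.

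\textbf{Easy properties.} The value at $0$ is immediate: $\munderbar{\alpha}(0)=\min_{\theta}\alpha_\theta(0)=0$ and $\bar{\alpha}(0)=\max_{\theta}\alpha_\theta(0)=0$, since every $\alpha_\theta(0)=0$. Continuity of both $\munderbar{\alpha}$ and $\bar{\alpha}$ follows because each is the pointwise extremum over a fixed compact index set $\Theta$ of a jointly continuous function, which is continuous by a standard min/max continuity argument (Berge's maximum theorem). For strict monotonicity, fix $s_1<s_2$. For $\munderbar{\alpha}$, pick $\theta^\star\in\arg\min_{\theta}\alpha_\theta(s_2)$; then $\munderbar{\alpha}(s_2)=\alpha_{\theta^\star}(s_2)>\alpha_{\theta^\star}(s_1)\geq\min_{\theta}\alpha_\theta(s_1)=\munderbar{\alpha}(s_1)$, where the strict step uses that $\alpha_{\theta^\star}$ is itself strictly increasing. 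Dually, for $\bar{\alpha}$ pick $\theta^\star\in\arg\max_{\theta}\alpha_\theta(s_1)$ and estimate $\bar{\alpha}(s_1)=\alpha_{\theta^\star}(s_1)<\alpha_{\theta^\star}(s_2)\leq\bar{\alpha}(s_2)$.

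\textbf{Radial unboundedness and the main obstacle.} For $\bar{\alpha}$ this is trivial, since $\bar{\alpha}(s)\geq\alpha_{\theta_0}(s)\to\infty$ as $s\to\infty$ for any fixed $\theta_0\in\Theta$. The delicate case, and the step I expect to be the crux, is showing $\munderbar{\alpha}(s)\to\infty$: a priori the minimizing $\theta$ could drift with $s$ so as to keep the minimum bounded. I would argue by contradiction. Suppose the increasing function $\munderbar{\alpha}$ is bounded by some $M$. For each $n$ choose a minimizer $\theta_n$ with $\alpha_{\theta_n}(n)=\munderbar{\alpha}(n)\leq M$; by compactness of $\Theta$ extract a convergent subsequence $\theta_{n_j}\to\theta^\star$. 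For any fixed $s_0$ and all $n_j\geq s_0$, monotonicity of $\alpha_{\theta_{n_j}}$ gives $\alpha_{\theta_{n_j}}(s_0)\leq\alpha_{\theta_{n_j}}(n_j)\leq M$, and passing to the limit via joint continuity yields $\alpha_{\theta^\star}(s_0)\leq M$ for every $s_0$. This contradicts $\alpha_{\theta^\star}$ being $\mathcal{K}_\infty$, hence unbounded. Therefore $\munderbar{\alpha}(s)\to\infty$, and both $\munderbar{\alpha}$ and $\bar{\alpha}$ are $\mathcal{K}_\infty$-functions. The only hypothesis doing real work beyond compactness is the joint continuity in $\theta$, without which the minimum need not even be attained; I would note explicitly that this continuity is supplied by the preceding construction.
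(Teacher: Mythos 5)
Your proof is correct, and it overlaps with the paper's on the easy parts while diverging in two places worth noting. The value at zero and the strict-monotonicity arguments are essentially identical to the paper's (same choice of minimizer/maximizer at the appropriate endpoint, same chain of inequalities). For continuity, the paper runs an explicit $\epsilon$--$\delta$ contradiction argument, taking $\bar{\delta}=\min_{\theta\in\Theta}\delta(\bar{\epsilon}/2,\theta)>0$ by compactness, whereas you invoke the standard continuity-of-extrema result (Berge); these are the same idea, and both tacitly require that $(\theta,s)\mapsto\alpha_\theta(s)$ be continuous in $\theta$ as well as in $s$ --- a hypothesis absent from the lemma statement that you are right to surface, since without it the claim fails (e.g.\ $\alpha_\theta(s)=\theta s+\arctan s$ for $\theta\in(0,1]$ with $\alpha_0(s)=s$ on the compact set $\Theta=[0,1]$ has $\inf_\theta\alpha_\theta(s)=\arctan s$, which is bounded). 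The substantive difference is radial unboundedness of $\munderbar{\alpha}$: the paper disposes of it in one clause (``because all $\alpha_\theta(\cdot)$ are''), which is not a valid inference at the stated level of generality, while you give the genuine argument --- extract a convergent subsequence of minimizers $\theta_{n_j}\to\theta^\star$, use monotonicity to push the bound down to every fixed $s_0$, and contradict unboundedness of $\alpha_{\theta^\star}$. Your version therefore patches a real gap in the paper's proof of the lemma as stated; in the paper's actual application the gap is harmless because the extended functions $\tilde{\alpha}_{i,\theta}$ are linear with slope $\alpha_{i,\theta}(r(\theta))/r(\theta)$ beyond $r(\theta)$, and compactness plus continuity bounds that slope away from zero uniformly, but that shortcut is not available for the general statement you were asked to prove.
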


\begin{proof}
First, we prove the result for $\munderbar{\alpha}(\cdot)$. Note that $\munderbar{\alpha}(0)=\min_{\theta \in \Theta} \alpha_{\theta}(0) = 0$. It is also strictly increasing: consider $s_0<s_1$, and $\theta_i = \arg\min_{\theta \in \Theta} \alpha_{\theta}(s_i)$, which are well-defined because $\Theta$ is compact, then $\munderbar{\alpha}(s_0) = \alpha_{\theta_0}(s_0) \leq \alpha_{\theta_1} (s_0) < \alpha_{\theta_1}(s_1) = \munderbar{\alpha}(s_1)$. And $\munderbar{\alpha}(\cdot)$ is radially unbounded, because all $\alpha_\theta(\cdot)$ are. %since $\frac{\alpha_{i,\theta}(r(\theta))}{r(\theta)}>0$ for all $\theta \in \Theta$.

Next, assume that $\munderbar{\alpha}(\cdot)$ is not continuous at some $s_0$. Hence there exists $\bar{\epsilon}>0$ such that $\forall \bar{\delta}>0$ there exists $s_1$ satisfying $\abs{s_0-s_1} < \bar{\delta}$, such that $\abs{\munderbar{\alpha}(s_0)-\munderbar{\alpha}(s_1)} \geq \bar{\epsilon}$. 

Since $\alpha_\theta(\cdot)$ is continuous for any fixed $\theta$, $\forall \epsilon$ there exists $\delta(\epsilon,\theta)$ such that $\abs{s_0-s_1} < \delta(\epsilon,\theta) \implies \abs{{\alpha}_{\theta}(s_0)-{\alpha}_\theta(s_1)} < {\epsilon}$. Take $\epsilon = \frac{\bar{\epsilon}}{2}$; $\bar{\delta}=\min_{\theta \in \Theta} \delta(\frac{\bar{\epsilon}}{2},\theta)>0$, which is positive because $\Theta$ is compact; and $s_1$ such that $\abs{s_0-s_1} < \bar{\delta}$, $\abs{\munderbar{\alpha}(s_0)-\munderbar{\alpha}(s_1)} \geq \bar{\epsilon}$. Assume that $\munderbar{\alpha}(s_0) \leq \munderbar{\alpha}(s_1) - \bar{\epsilon}$, the other case is analogous. Then, $\alpha_{\theta_0}(s_1) \stackrel{\text{cont.} \, \alpha_{\theta}}{<} \alpha_{\theta_0}(s_0) + \frac{\bar{\epsilon}}{2} \stackrel{\text{def.} \, \theta_0}{=} \munderbar{\alpha}(s_0) + \frac{\bar{\epsilon}}{2} \stackrel{\text{disc.} \, \munderbar{\alpha}}{\leq} 
\munderbar{\alpha}(s_1) - \frac{\bar{\epsilon}}{2} \stackrel{\text{def.} \, \theta_1}{=} {\alpha}_{\theta_1}(s_1) - \frac{\bar{\epsilon}}{2}  \stackrel{\theta_1 \, \min}{\leq} 
\alpha_{\theta_0}(s_1) - \frac{\bar{\epsilon}}{2}$, which is a contradiction.

Note that $\bar{\alpha}(\cdot)$ is also strictly increasing: consider $s_0<s_1$, and $\bar{\theta}_i = \arg\max_{\theta \in \Theta} \alpha_{\theta}(s_i)$, then $\bar{\alpha}(s_1) = \alpha_{\bar{\theta}_1}(s_1) \geq \alpha_{\bar{\theta}_0} (s_1) > \alpha_{\bar{\theta}_0}(s_0) = \bar{\alpha}(s_0)$. The rest is analogous to $\munderbar{\alpha}(\cdot)$.
\end{proof}

Since $f(\cdot),g(\cdot)$ are continuously differentiable over compact sets, they are Lipschitz continuous. Since $\max(\cdot,0)$ is also Lipschitz continuous, then $V(\cdot)$ is Lipschitz continuous with constant $L_{V}$, and we have
% \begin{equation*}\label{eq:ISSstab}\begin{array}{rl}
% & V(u_{t}^+,\hat{\theta}_{t})-V(u_{t-1}^+,\hat{\theta}_{t-1})  \\[0.1cm]

% = & V\big(T(u_t ,\hat{x}_t,\hat{\theta}_{t}),\hat{\theta}_{t}\big)-V(u_t ,\hat{\theta}_{t})  \\
%          & + V(u_t ,\hat{\theta}_{t})-V(u_{t-1}^+,\hat{\theta}_{t-1})  \\
%          & + V\big(\underbrace{T(u_t ,z_t,\hat{\theta}_{t})}_{u_t^+},\hat{\theta}_{t}\big) - V\big({T(u_t ,\hat{x}_t,\hat{\theta}_{t})},\hat{\theta}_{t}\big) \\[-0.2cm]

% \stackrel{\eqref{eq:descdir}}{\leq} & -\nu \normsz{T(u_t,\chi(u_t,d_t),\hat{\theta}_t)-u_t}_2^2 \\
% & + L_{V} \normsz{\hat{\theta}_{t}-\hat{\theta}_{t-1}}_2  + L_{V}L_T \normsz{\hat{x}_t-z_t}_2 \\[0.1cm]

% \stackrel{\eqref{eq:Kfunc}}{\leq} & -\nu \alpha_1(\distset{u_t}_{\mathcal{U}^*(\hat{\theta}_t)}) + L_{V} \normsz{\hat{\theta}_{t}-\hat{\theta}_{t-1}}_2 +  L_{V}L_T \normsz{\hat{x}_t-z_t}_2,
% \end{array}
% \end{equation*}
% so that $V(u_{t}^+,\hat{\theta}_{t})-V(u_{0}^+,\hat{\theta}_{0}) \leq - t \nu \alpha_1(\inf_{k\leq t}\distset{u_k}_{\mathcal{U}^*(\hat{\theta}_k)}) + L_{V}\sum_{k=1}^t  \normsz{\hat{\theta}_{k}-\hat{\theta}_{k-1}}_2 + L_T \normsz{\hat{x}_k-z_k}_2 $. Since $\mathcal{U},{\Theta}$ are compact, $V(u,\hat{\theta})$ is lower bounded, hence it must hold that $\inf_{t\geq 1} \distset{u_t}_{\mathcal{U}^s(\hat{\theta}_t)} \leq \sup_{t\geq 1} \alpha^{-1}\big(\frac{L_V}{\nu} (\normsz{\hat{\theta}_{t}-\hat{\theta}_{t-1}}_2+ L_T \normsz{\hat{x}_t-z_t}_2 ) \big)$, i.e., the limit property holds \cite{gao2000equivalent,tran2015equivalences}. 

\begin{equation*}\label{eq:ISSstab}\begin{array}{rl}
& V(u_{t+1},\hat{\theta}_{t+1})-V(u_{t},\hat{\theta}_{t})  \\[0.1cm]

= & V\big(T(u_t ,\hat{x}_t,\hat{\theta}_{t}),\hat{\theta}_{t}\big)-V(u_t ,\hat{\theta}_{t})  \\
& + V(u_{t+1} ,\hat{\theta}_{t+1})-V(u_{t}^+,\hat{\theta}_{t})  \\
         & + V\big(\underbrace{T(u_t ,z_t,\hat{\theta}_{t})}_{u_t^+},\hat{\theta}_{t}\big) - V\big({T(u_t ,\underbrace{\chi(u_t,\hat{d}_t)}_{\hat{x}_t},\hat{\theta}_{t})},\hat{\theta}_{t}\big) \\[-0.2cm]

\stackrel{\eqref{eq:descdir}}{\leq} & -\nu \normsz{T(u_t,\hat{x}_t,\hat{\theta}_t)-u_t}_2^2 \\
& + L_{V} \normsz{\hat{\theta}_{t}-\hat{\theta}_{t-1}}_2  + L_{V}L_T \normsz{\hat{x}_t-z_t}_2 \\[0.1cm]

\stackrel{\eqref{eq:Kfunc}}{\leq} & -\nu \alpha_1(\distset{u_t}_{\mathcal{U}^*(\hat{\theta}_t)}) + L_{V} \normsz{\hat{\theta}_{t+1}-\hat{\theta}_{t}}_2 +  L_{V}L_T \normsz{\hat{x}_t-z_t}_2,
\end{array}
\end{equation*}
so that 
\begin{equation*}\begin{array}{rl}
& V(u_{t+1},\hat{\theta}_{t+1})  -V(u_{0},\hat{\theta}_{0}) \\
\leq & - (t+1) \nu \alpha_1(\inf_{k\leq t}\distset{u_k}_{\mathcal{U}^*(\hat{\theta}_k)}) \\
& +  L_{V} (t+1) \sup_{k\leq t} (\normsz{\hat{\theta}_{k+1}-\hat{\theta}_{k}}_2+ L_T \normsz{\hat{x}_k-z_k}_2 ).
\end{array}
\end{equation*}
Since $\mathcal{U},{\Theta}$ are compact, $V(u,\hat{\theta})$ is lower bounded, hence it must hold that $\inf_t \distset{u_t}_{\mathcal{U}^s(\hat{\theta}_t)} \leq \sup_{t} \alpha^{-1}\big(\frac{L_V}{\nu} (\normsz{\hat{\theta}_{t+1}-\hat{\theta}_{t}}_2+ L_T \normsz{\hat{x}_t-z_t}_2 ) \big)$, i.e., the limit property holds \cite{gao2000equivalent,tran2015equivalences}. 

%\fi

% that's all folks
\end{document}